\newtheorem{theorem}{Theorem}
\newtheorem{lemma}[theorem]{Lemma}
\newtheorem{remark}{Remark}
\newcommand{\ket}[1]{|#1\rangle}
\newcommand{\bra}[1]{\langle#1|}
\newcommand{\tr}{\text{\rm tr}}
\newcommand{\supp}{\text{\rm supp}}
\newcommand{\cC}{\mathcal{C}}
\newcommand{\cH}{\mathcal{H}}
\newcommand{\LHV}{\text{\rm{L}}}
\newcommand{\q}{\text{\rm{q}}}
\newcommand{\qs}{\text{\rm{qs}}}
\newcommand{\qa}{\text{\rm{qa}}}
\newcommand{\qc}{\text{\rm{qc}}}
\newcommand{\spn}{\text{\rm{span}}}
\newcommand{\rank}{\text{\rm{rank}}}
\newcommand{\Schrank}{\text{\rm{Sch-rank}}}
\newcommand{\sfA}{\mathsf{A}}
\newcommand{\sfB}{\mathsf{B}}
\newcommand{\sfE}{\mathsf{E}}
\newcommand{\sfF}{\mathsf{F}}
\begin{document}

\title{Separation of quantum, spatial quantum, and approximate quantum correlations}

\author{Salman Beigi}
\affiliation{QuOne Lab, Phanous Research and Innovation Centre, Tehran, Iran}
\orcid{0000-0003-3588-4662}
\email{salman@phanous.ir}

\maketitle

\begin{abstract}
Quantum nonlocal correlations are generated by implementation of local quantum measurements on spatially separated quantum subsystems. Depending on the underlying mathematical model, various notions of sets of quantum correlations can be defined. In this paper we prove separations of such sets of quantum correlations. In particular, we show that the set of bipartite quantum correlations with four binary measurements per party becomes strictly smaller once we restrict the local Hilbert spaces to be finite dimensional, \emph{i.e.}, $\cC_{\q}^{(4, 4, 2,2)} \neq \cC_{\qs}^{(4, 4, 2,2)}$.  We also prove non-closure of the set of bipartite quantum correlations with four ternary measurements per party, \emph{i.e.}, $\cC_{\qs}^{(4, 4, 3,3)} \neq \cC_{\qa}^{(4, 4, 3,3)}$.
\end{abstract}

\section{Introduction}

Nonlocality is one of the most fascinating features of quantum physics, stating that spatially separated parties can generate correlations that cannot be generated in the \emph{local hidden variable model}~\cite{EPR}. In a bipartite nonlocality scenario \`a la Bell~\cite{Bell64}, it is assumed that two parties, Alice and Bob can apply a measurement of their choice, which we denote by labels $s$ and $t$ respectively, on their respective subsystems. The measurement outcomes $a, b$, in the most general setting, are probabilistic. Thus we denote by $p(a, b|s, t)$ the probability of obtaining outputs $a, b$ by Alice and Bob when they apply measurements $s, t$ respectively. 
The set of all correlations $p(a, b|s, t)$ that can be represented in the local hidden variable model is denoted by $\cC_{\LHV}$, and is called the set of \emph{local correlations}. 
Bell nonlocality theorem~\cite{Bell64} states that there are correlations in the quantum world that do not belong to $\cC_\LHV$.

Correlations in the quantum theory are obtained as follows. Corresponding to Alice and Bob's local subsystems there are Hilbert spaces\footnote{All Hilbert spaces in this paper are assumed to be separable.} $\cH_{\mathsf A}$ and $\cH_{\mathsf B}$, and the state of their joint system is described by a unit vector $\ket\psi\in \cH_{\mathsf A}\otimes \cH_{\mathsf B}$. Alice and Bob's measurement settings $s, t$ correspond to projective measurements $\{P_s^{(a)}:~ a\}$ and $\{Q_t^{(b)}:~b\}$ respectively.\footnote{By enlarging the local subspaces $\cH_{\mathsf A}, \cH_{\mathsf B}$ and taking a purification we may assume that the shared state is pure. Moreover, by Naimark's dilation theorem we may assume that the local measurement operators are projective. We stick to these assumptions all over the paper.} Then the probability $p(a, b| s, t)$ of obtaining outputs $a, b$ is given by
\begin{align}\label{eq:quantum}
p(a, b| s, t)= \langle \psi | P_s^{(a)}\otimes Q_t^{(b)}\ket \psi. 
\end{align}
The set of quantum correlations $p(a, b| s, t)$ that can be written as above is denoted by $\cC_\qs$ and is called \emph{spatial quantum correlations}. We emphasize that for correlations in $\cC_\qs$ the local Hilbert spaces $\cH_{\mathsf A}, \cH_{\mathsf B}$ may have infinite dimensions. Then to distinguish the case of finite dimensional Hilbert spaces, we denote by $\cC_\q$ the set of correlations of the form~\eqref{eq:quantum} in which the local Hilbert spaces $\cH_{\mathsf A}, \cH_{\mathsf B}$ have finite dimensions.  By the definitions we have $\cC_\q \subseteq \cC_\qs$. We also note that by standard tricks local correlations can be realized in the quantum world as well. 
Indeed, local correlations have quantum representations in finite dimensions, so we have
$\cC_\LHV\subseteq \cC_\q$.

We may define yet two other sets of correlations related to the quantum theory. It is not hard to verify that $\cC_\LHV$ is closed. However, it is not clear that $\cC_\q$ and $\cC_\qs$ are closed. Thus we denote by $\cC_{\qa}$ the closure of $\cC_{\q}$ and call it the set of \emph{approximate quantum correlations}. As correlations in $\cC_\qs$ can be approximated by finite dimensional ones, we have $\cC_\qs\subseteq \cC_\qa$ and that $\cC_\qa$ is the closure of $\cC_\qs$ as well~\cite{ScholzWerner08}.

\begin{table}[t]
\renewcommand{\arraystretch}{1.3}
\begin{center}
  \begin{tabular}{ | c | l  |} 
     \hline 
    $\quad\cC_\LHV \quad$ & {\small local correlations} \\  \hline 
    $\cC_\q$ & {\small finite-dimensional quantum correlations} \\ \hline
    $\cC_{\qs}$ & {\small spacial (infinite-dimensional) quantum correlations} \\ \hline
    $\cC_{\qa}$ & {\small approximate quantum correlations (closure of $\cC_{\q}$ and $\cC_{\qs}$)} \\     \hline
    $\cC_{\qc}$ & {\small commuting quantum correlations} \\ 
    \hline 
  \end{tabular}
 \caption{\small Summary of correlation sets }
 \label{table:0}
\end{center}
  \end{table}

We would also like to briefly mention another generalization of quantum correlations motivated by quantum field theory. Here, instead of assuming that Alice and Bob have their individual subsystems and their associated Hilbert spaces, it is assumed that there is a single Hilbert space to which the shared state $\ket \psi$ belong. Then the measurement operators $P_s^{(a)}$ of Alice commute with measurement operators $Q_t^{(b)}$ so that the joint measurement $P_s^{(a)}Q_t^{(b)}$ makes sense.
The set of correlations obtained in this model is denoted by $\cC_\qc$ and is called the set of \emph{commuting quantum correlations}. Observe that thinking of $\cH_{\mathsf A}\otimes \cH_{\mathsf B}$ as a single Hilbert space, and replacing $P_s^{(a)}$ and $Q_t^{(b)}$ with $P_s^{(a)}\otimes I_{\mathsf B}$ and $I_{\mathsf A}\otimes Q_t^{(b)}$ of the previous picture, these measurement operators commute. Then $\cC_\qc$ contains quantum (finite or infinite dimensional) correlations. Moreover, it is known that $\cC_\qc$ is closed~\cite{Fritz12}. Therefore, $\cC_\qa\subseteq \cC_\qc$. 
Putting all these together the following hierarchy of sets of correlations is obtained:
\begin{align}\label{eq:hierarchy}
\cC_\LHV\subseteq\cC_\q\subseteq\cC_\qs\subseteq \cC_\qa\subseteq \cC_\qc.
\end{align}

To get a finer perspective of these sets let us assume that Alice's measurement label $s$ takes $n_A$ values, and there are $n_B$ values $t$ for Bob's measurement setting. Also let $m_A$ and $m_B$ be the number of possible values of $a$ and $b$, outputs of Alice and Bob respectively. That is, for instance, Alice chooses $s$ amongst $n_A$ possible choices and after measurement outputs $a$ among $m_A$ possible values. Then we denote the set of such correlations $p(a, b| s, t)$ in $\cC_\ast$ for $\ast\in\{ \LHV, \q, \qs, \qa , \qc\}$ by $\cC_\ast^{(n_A, n_B, m_A, m_B)}$ which indeed is a subset of $\mathbb R^{n_An_Bm_Am_B}$. Observe that if any of $n_A, n_B, m_A, m_B$ equals $1$, all the sets $\cC_\ast^{(n_A, n_B, m_A, m_B)}$ become trivial and collapse to the lower level $\cC_\LHV^{(n_A, n_B, m_A, m_B)}$. This is because when, \emph{e.g.}, $m_A=1$, Alice's output is fixed. Also if $n_A=1$, Bob knows Alice's measurement setting and can guess her output.   Thus in the following we always assume that $n_A, n_B, m_A, m_B\geq 2$.

Bell's theorem~\cite{Bell64} says that the first inclusion in~\eqref{eq:hierarchy} is strict, \emph{i.e.}, $\cC_\LHV\neq \cC_\q$. More precisely, by the example of the CHSH inequality $\cC_\LHV^{(2, 2, 2, 2)}\neq \cC_\q^{(2, 2, 2, 2)}$~\cite{CHSH}. Then (simply by ignoring some measurement settings and some output values) we also have $\cC_\LHV^{(n_A, n_B, m_A, m_B)}\neq \cC_\q^{(n_A, n_B, m_A, m_B)}$ for any $n_A, n_B, m_A, m_B\geq 2$. 

Separation of other inclusions in~\eqref{eq:hierarchy} is started by the work of Slofstra~\cite{Slofstra16} who showed that $\cC_{\qs}\neq \cC_{\qc}$. He then improved his result in~\cite{Slofstra17} and showed that $\cC_{\qs}\neq \cC_{\qa}$. This separation is proven in~\cite{Slofstra17} for the parameters $(n_A, n_B, m_A, m_B)=(185, 235, 8, 2)$. This result then improved in~\cite{Dykema+18} and~\cite{MusatRodram19} for the parameters  $(n_A, n_B, m_A, m_B)=(5, 5, 2, 2)$. Later, Coladangelo~\cite{Col19} gave a quite simple proof of this separation based on the ideas of \emph{self-testing} and \emph{entanglement embezzlement}~\cite{vanDamHayden03}.  Coladangelo's proof of $\cC_{\qs}\neq \cC_{\qc}$ although simpler, is for the parameters  $(n_A, n_B, m_A, m_B)=(5, 6, 3, 3)$ and does not improve the parameters of any of the previous results.

Separation of $\cC_\q$ and $\cC_\qs$ was first conjectured in~\cite{PalVertesi10} and then proved by Coladangelo and Stark in~\cite{ColStark18}. Their proof is quite elementary and is based on the idea of self-testing of the so called \emph{tilted CHSH inequality}~\cite{Acin+12}. The separation $\cC_\q\neq \cC_\qs$ in~\cite{ColStark18} is proven for the parameters $(n_A, n_B, m_A, m_B)=(4, 5, 3, 3)$.

We also add that the separation of $ \cC_\qa$ and $\cC_\qc$ was very recently established in the breakthrough work of Ji \emph{et al.}~\cite{Ji+20} who also showed the refutation of \emph{Connes' embedding conjecture.}

\paragraph{Our results:} In this paper we give new proofs for both the separations $\cC_\q\neq \cC_\qs$  and $\cC_\qs\neq \cC_\qa$. Here is our first result. 

\begin{theorem}\label{thm:C-q-C-qs}
$\cC_{\q}^{(4, 4, 2, 2)} \neq \cC_{\qs}^{(4,4, 2,2)}$.
\end{theorem}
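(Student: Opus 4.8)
The plan is to exhibit a single correlation $p^\ast\in\cC_{\qs}^{(4,4,2,2)}$ by an explicit, necessarily infinite-dimensional, construction, and then to prove a \emph{rigidity} (self-testing) statement saying that every quantum realization of $p^\ast$ must use a shared state of infinite Schmidt rank. Since a finite-dimensional realization has finite Schmidt rank, this shows $p^\ast\notin\cC_{\q}^{(4,4,2,2)}$, and together with $\cC_{\q}\subseteq\cC_{\qs}$ this yields the separation. The basic tool for the rigidity part is self-testing of the tilted CHSH inequality: with $\pm1$ observables $A_1,A_2$ for Alice and $B_1,B_2$ for Bob, the maximal value of $\alpha\la A_1\ra+\la A_1B_1\ra+\la A_1B_2\ra+\la A_2B_1\ra-\la A_2B_2\ra$ over all quantum realizations is $\sqrt{8+2\alpha^{2}}$, and achieving it pins down, up to a local isometry and an auxiliary state, the partially entangled qubit pair $\cos\theta\ket{00}+\sin\theta\ket{11}$ (with $\tan\theta$ a known decreasing function of $\alpha$ ranging over $(0,1]$ as $\alpha$ ranges over $[0,2)$) together with the canonical form of all four observables. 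Only the standard exact version of this statement will be used.

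To construct $p^\ast$, fix $r\in(0,1)$, take $\ket{\psi^\ast}=\sqrt{1-r^{2}}\sum_{k\ge0}r^{k}\ket{k}_{\sfA}\ket{k}_{\sfB}$, and choose four binary observables per party with a \emph{self-similar} block pattern along the chain $\ket0,\ket1,\ket2,\dots$: a natural option is to let $A_1,A_2$ (and $B_1,B_2$) implement the canonical tilted-CHSH observables on every \emph{even-aligned} two-dimensional block $\spn\{\ket{2j},\ket{2j+1}\}$ and $A_3,A_4$ (and $B_3,B_4$) implement them on every \emph{odd-aligned} block $\spn\{\ket{2j+1},\ket{2j+2}\}$, with a fixed convention at the boundary level $\ket0$; the tilt $\alpha$ is chosen so that the self-tested canonical pair has Schmidt ratio exactly $r$. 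Let $p^\ast$ be the resulting correlation. Then $p^\ast\in\cC_{\qs}^{(4,4,2,2)}$ by construction, while $\Schrank(\ket{\psi^\ast})=\infty$ since every $r^{k}$ is nonzero.

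The core is the rigidity claim: any $(\ket\psi,\{A_s\},\{B_t\})$ reproducing $p^\ast$ has $\Schrank(\ket\psi)=\infty$. Because the tilted-CHSH combination is linear in the state, once one knows that the shared state together with these observables admits a block-adapted decomposition, the value $p^\ast$ assigns to the $(A_1,A_2,B_1,B_2)$ combination forces each block on which the state is supported to be a canonical ratio-$r$ pair with the block-restrictions of $A_1,A_2,B_1,B_2$ canonical, and likewise for $(A_3,A_4,B_3,B_4)$ on the odd-aligned blocks. The remaining entries of $p^\ast$ — in particular the cross correlators $\la A_iB_j\ra$ with $i\in\{1,2\}$ and $j\in\{3,4\}$, and vice versa — are engineered so that the even- and odd-aligned block decompositions interlock into a single infinite chain and so that the Schmidt coefficients of $\ket\psi$ satisfy $c_{k+1}=r\,c_k$ along it. Hence no $c_k$ vanishes, the chain never terminates, $\Schrank(\ket\psi)=\infty$, and $p^\ast\notin\cC_{\q}^{(4,4,2,2)}$.

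The main obstacle is carrying out this interlocking-and-propagation step with only four binary measurements per party. One must: (i) show that in an \emph{arbitrary} realization the observed correlation forces a genuinely two-dimensional block structure for the observables rather than some degenerate alternative — this is what makes a parsimonious, tightly constrained choice of observables necessary; (ii) design the cross correlators so that they \emph{force}, not merely permit, the even and odd decompositions to share exactly the right levels; and (iii) handle the boundary level $\ket0$ so that the recursion closes up without spending an extra measurement or an extra outcome, which is precisely where earlier constructions were forced to larger parameters $(n_A,n_B,m_A,m_B)$. Once the gadget is designed so that these hold, the self-testing input and the Schmidt-rank bookkeeping combine to complete the proof.
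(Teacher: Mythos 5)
Your overall strategy --- two interleaved copies of the exact tilted-CHSH self-test on a geometric Schmidt chain, tied together by cross-correlations --- is the right one and is essentially the paper's, but the two steps you flag as ``the main obstacle'' are not technicalities; they are exactly where your construction, as written, breaks. First, with the one-sided state $\sum_{k\ge0}r^k\ket k\ket k$ the odd-aligned pairing $\{\ket1,\ket2\},\{\ket3,\ket4\},\dots$ necessarily leaves $\ket 0$ as an unpaired rank-one direct summand. Binary observables $A_3,A_4,B_3,B_4$ must act as $\pm1$ scalars on $\spn\{\ket0\}$, and on that one-dimensional summand the tilted-CHSH expression is bounded by the classical value $2+\beta<\sqrt{8+2\beta^2}$; hence the $(A_3,A_4,B_3,B_4)$ marginal of your $p^\ast$ is a strict convex combination and is \emph{not} a maximal tilted-CHSH correlation, so the exact self-testing theorem you want to invoke simply does not apply to it. This is precisely the obstruction that forced Coladangelo--Stark to a third outcome (used to project off $\ket0$). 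The paper's resolution is to change the state to the two-sided chain $\frac{1}{\sqrt C}\sum_{i\in\mathbb Z}\alpha^{|i|}\ket i_{\mathsf Z}\ket i_{\mathsf Z}$, for which both the even pairing $\{2j,2j+1\}$ and the odd pairing $\{2j,2j-1\}$ cover the whole chain with no leftover, so both tilted-CHSH tests are exactly maximal with genuinely binary outcomes.

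Second, even after this fix your proposed rigidity mechanism --- propagating Schmidt coefficients via $c_{k+1}=r\,c_k$ --- no longer produces a contradiction. For the two-sided state each of the two self-tests asserts only that the multiset of Schmidt coefficients of the unknown state has the form $S\cup\alpha S$ (up to normalization), and these two constraints are mutually consistent and satisfiable in finite dimension; there is no boundary term to anchor an infinite descent. The paper therefore abandons Schmidt-coefficient bookkeeping and argues instead with ranks and supports of the measurement operators: it introduces the observable $M=\frac{1}{2\cos\mu}(Q_2+Q_3)$, uses the prescribed cross-correlations $p(a,b|0,t)$ for $t\in\{2,3\}$ to show $P_0^{(1)}\otimes M^{(0)}\ket\phi=P_0^{(1)}\otimes I\ket\phi$, compares $\rank M^{(0)}=\dim\cH_{\mathsf B'_2}=\frac12\dim\cH_{\mathsf B}$ with the support of $\ket1\bra1\otimes\tr_{\mathsf A'_0}\ket{\phi'_0}\bra{\phi'_0}$ to force $V_0M^{(0)}V_0^\dagger=\ket1\bra1\otimes I_{\mathsf B'_0}$, and then derives the contradiction $0=\bra\phi P_0^{(0)}\otimes M^{(0)}\ket\phi\ge 1/C^2$. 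If you want to keep the one-sided state and Schmidt-coefficient propagation, you must pay for the boundary with a third outcome (which is how the known $(4,5,3,3)$ and the paper's $(3,4,3,2)$ variants arise); to reach $(4,4,2,2)$ you need both the two-sided state and a replacement for the propagation argument.
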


This theorem is an improvement on the result of~\cite{ColStark18} that establishes this separation for $(n_A, n_B, m_A, m_B) = (4, 5, 3, 3)$. We also, building on this result, show that
$$\cC_{\q}^{(3, 4, 3, 2)} \neq \cC_{\qs}^{(3,4, 3,2)},$$
which again is an improvement on~\cite{ColStark18}, but is not comparable to Theorem~\ref{thm:C-q-C-qs}.

Similarly to~\cite{ColStark18} our proof of this theorem is based on the self-testing of tilted CHSH correlations. That is, we assume that the parties are playing two copies of the tilted CHSH game to self-test a certain entangled state. But these two tests are not independent, and there are certain correlations among their outputs. This forces a particular structure on the shared entangled state. Then by studying the ranks of measurement operators we conclude that they cannot be finite, so that $\cC_\q\neq \cC_\qs$. 

To emphasize on the main difference of our proof with that of~\cite{ColStark18}, we note that the shared entangled state $\ket\psi$ in~\cite{ColStark18} is taken to be $\sum_{i=0}^\infty \alpha^i \ket{i}\ket i$ up to a normalization factor for some $0<\alpha<1$. Then this state can be written as 
$$\sum_{i=0}^\infty \alpha^i \ket{i}\ket i=\sum_{j=0}^\infty \alpha^{2j}\Big( \ket{2j}\ket{2j} + \alpha\ket{2j+1}\ket{2j+1} \Big),$$
which up to local isometries equals $\big( \ket{00}+\alpha\ket{11}  \big)\otimes \ket{\psi'}$ for some auxiliary state $\ket{\psi'}$. So we may apply the tilted CHSH game to self-test the shared state $\big( \ket{00}+\alpha\ket{11}  \big)$. In another decomposition we may write 
$$\sum_{i=0}^\infty \alpha^i \ket{i}\ket i=   \sum_{j=1}^\infty \alpha^{2j-1}\Big( \ket{2j-1}\ket{2j-1} + \alpha\ket{2j}\ket{2j} \Big)+\ket0\ket 0,$$
which again up to local isometries equals $\big( \ket{00}+\alpha\ket{11}  \big)\otimes \ket{\psi'} \bigoplus \ket{22}$. Thus again we may apply the tilted CHSH game on the orthogonal subspace to $\ket 2$ to self-test $\big( \ket{00}+\alpha\ket{11}  \big)$. Considering these two tilted CHSH games and the non-trivial correlations between them, the main result of~\cite{ColStark18} is derived. Here our proof strategy departs from~\cite{ColStark18} by considering the starting shared state 
$$\sum_{i=-\infty}^{+\infty} \alpha^{|i|} \ket i\ket i.$$
This state can be written as $\big( \ket{00}+\alpha\ket{11}  \big)\otimes \ket{\psi'}$ up to local isometries in two different ways \emph{without any extra summand}. This would allow us to decrease the number of outputs from $m_A=m_B=3$ to $m_A=m_B=2$. However, after this modification, the argument of~\cite{ColStark18} based on the analysis of Schmidt coefficients of the shared state does not directly work. To overcome this difficulty, we further use self-testing properties of tilted CHSH correlations and build our arguments based on the ranks of the measurement operators. This gives us our first result $\cC_{\q}^{(4, 4, 2, 2)}\neq \cC_{\qs}^{(4, 4, 2, 2)}$.

 We note that $\cC_\q^{(2, n_B, 2, 2)}=\cC_\qs^{(2, n_B, 2, 2)}= \cC_\qa^{(2, n_B, 2, 2)}$ since any pair of projections (Alice's measurement operators) can be block-diagonalized with blocks of size at most 2. Thus, the first place that the separation of these sets is expected, and is conjectured in~\cite{PalVertesi10}, is for the parameters $(n_A, n_B, m_A, m_B)=(3, 3, 2, 2)$. Thus our result does not close the problem, yet it gets closer to what is conjectured.

Our second result is regarding the separation $\cC_\qs\neq \cC_\qa$.

\begin{theorem}\label{thm:C-qs-qa-0}
$\cC_\qs^{(4, 4, 3, 3)}\neq \cC_\qa^{(4, 4, 3, 3)}$.
\end{theorem}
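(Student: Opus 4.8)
The plan is to exhibit a correlation $p\in\cC_\qa^{(4,4,3,3)}$ that is a limit of finite-dimensional quantum correlations realized on truncated van~Dam--Hayden embezzling states, and then to show that $p$ admits no exact spatial (tensor-product) realization. As in the proof of Theorem~\ref{thm:C-q-C-qs}, two of Alice's and two of Bob's measurements will implement one tilted CHSH test with a fixed parameter $0<\alpha<1$, and the remaining two of each will implement a second such test; the third outcome of each measurement plays the role of a ``boundary flag'', analogous to the $\ket{22}$ summand appearing in the discussion after Theorem~\ref{thm:C-q-C-qs}. The cross-correlations between the two tests, together with the boundary flag, will be engineered so that the state self-tested by the first test is, up to local isometries, the tensor product of $\ket{00}+\alpha\ket{11}$ with the state self-tested by the second test, while the second test's residual (``junk'') state is in turn identified with the original shared state. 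In one line, the correlation $p$ will force the shared state $\ket\psi$ to satisfy $\ket\psi\cong\big(\ket{00}+\alpha\ket{11}\big)\otimes\ket\psi$ up to local isometries.

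First I would construct the approximating sequence: for each truncation level $N$ take the embezzling state $\ket{\mu_N}$ proportional to $\sum_{j=1}^{N} j^{-1/2}\ket j\ket j$, apply the local embezzling unitaries of~\cite{vanDamHayden03} to prepare a near-perfect copy of $\ket{00}+\alpha\ket{11}$ while disturbing $\ket{\mu_N}$ by only $o(1)$, and measure the tilted CHSH observables on that copy; running this ``coherently at two adjacent levels'' produces correlations $p_N\in\cC_\q^{(4,4,3,3)}$ with $p_N\to p$, so that $p\in\cC_\qa^{(4,4,3,3)}$ since $\cC_\qa$ is closed. This step is routine given the robustness of the embezzling protocol; the real content is choosing the gadget so that the limiting correlation has the clean self-similar form above.

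Next, assuming for contradiction that $p\in\cC_\qs^{(4,4,3,3)}$ is realized by some $\ket\psi\in\cH_\sfA\otimes\cH_\sfB$ with projective measurements, I would invoke the robust self-testing of the tilted CHSH correlation~\cite{Acin+12} on each of the two embedded tests to extract local isometries and the tensor-factorization $V_\sfA\otimes V_\sfB\ket\psi=\big(\ket{00}+\alpha\ket{11}\big)\otimes\ket{\psi'}$, and then use the cross-correlation and boundary-flag constraints to certify that the residual state $\ket{\psi'}$ has the same local spectrum as $\ket\psi$. Passing to the reduced density operator $\rho$ on Alice's side, this yields $\rho\cong\rho_\alpha\otimes\rho$ (equality of nonzero spectra), where $\rho_\alpha$ has eigenvalues $\tfrac1{1+\alpha^2}$ and $\tfrac{\alpha^2}{1+\alpha^2}$. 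But $\rho$ is a density operator on a separable space, hence trace-class and compact, so it has a largest eigenvalue $\lambda_{\max}>0$; on the other hand the largest eigenvalue of $\rho_\alpha\otimes\rho$ is $\tfrac1{1+\alpha^2}\lambda_{\max}<\lambda_{\max}$ — a contradiction. Therefore $p\notin\cC_\qs^{(4,4,3,3)}$, and combined with $p\in\cC_\qa^{(4,4,3,3)}$ this gives the claimed separation.

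The main obstacle is the middle of the third step: designing the four-measurement, three-outcome gadget so that self-testing genuinely forces the \emph{self-similarity} $\ket\psi\cong\big(\ket{00}+\alpha\ket{11}\big)\otimes\ket\psi$, rather than merely forcing infinitely many Schmidt coefficients, which (as in Theorem~\ref{thm:C-q-C-qs}) would still be consistent with membership in $\cC_\qs$. Concretely, one must nest the two tilted CHSH tests so that the measurement operators of the second test act on the junk register produced by self-testing the first, and arrange the boundary-flag correlations so that the extracted isometries compose to identify ``one level down'' with ``the top level''; carrying this out while controlling all approximation errors so that the extracted structure becomes \emph{exact} in the spatial model is where the work lies.
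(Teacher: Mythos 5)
Your overall architecture --- an embezzlement family producing $p_n\in\cC_\q^{(4,4,3,3)}$ with limit $p_*\in\cC_\qa$, self-testing applied to any hypothetical exact realization, and a contradiction read off from the Schmidt spectrum of the shared state --- is the same as the paper's. But the core of your argument has a genuine gap, and the specific gadget you propose would not close it. You want to certify the self-similarity $\ket\psi\cong\big(\ket{00}+\alpha\ket{11}\big)\otimes\ket\psi$, i.e., to identify the \emph{junk} state of a self-test with the \emph{original} shared state. The correlation gives you no operational handle for this: the junk register exists only through the extracted isometry, and no pair of measurements in a $(4,4,3,3)$ scenario probes the Schmidt spectrum of $\ket\psi$ itself. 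What a cross-correlation constraint \emph{can} certify is an identity of the form $P\otimes I\ket\psi=I\otimes Q\ket\psi$ for measurement projections $P,Q$ belonging to the two different tests, which lets you equate the junk states of the two tests \emph{with each other} --- not with $\ket\psi$. And here your choice of two tilted CHSH tests with the \emph{same} $\alpha$ is fatal: both tests extract the same state $\ket{\Phi^{(\alpha)}}$, so equating their junks gives the tautology $\ket{\Phi^{(\alpha)}}\otimes\ket{\psi'}\cong\ket{\Phi^{(\alpha)}}\otimes\ket{\psi'}$ and no contradiction. You acknowledge that designing the gadget is ``where the work lies,'' but that design is the entire content of the theorem.

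The paper resolves exactly this point by making the two self-tests extract states with \emph{different} Schmidt spectra: questions $\{0,1\}$ run the SATWAP test (Theorem~\ref{thm:satwap}) certifying the maximally entangled state $\ket{\Phi_3}$ (coefficients $\tfrac1{\sqrt3},\tfrac1{\sqrt3},\tfrac1{\sqrt3}$), while questions $\{2,3\}$ run tilted CHSH with $\alpha=1/\sqrt2$ (coefficients $\sqrt{2/3},\sqrt{1/3}$); this is why three outcomes are needed, the third outcome of the CHSH pair being a never-occurring padding rather than a ``boundary flag.'' A single cross-correlation condition, $p_*(1,1|2,0)=p_*(a{=}1|s{=}2)=p_*(b{=}1|t{=}0)=\tfrac13$, forces $\widetilde P_2^{(1)}\otimes I\ket\phi=I\otimes\widetilde Q_0^{(1)}\ket\phi$, from which $\ket{11}\otimes\ket{\phi'_0}$ and $\ket{11}\otimes\ket{\phi'_1}$ are locally isometric, so the junk multisets satisfy $S_0=S_1$. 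Comparing the two factorizations of the same $\ket\phi$ then gives $\tfrac1{\sqrt3}S_0\cup\tfrac1{\sqrt3}S_0\cup\tfrac1{\sqrt3}S_0=\sqrt{\tfrac23}\,S_1\cup\tfrac1{\sqrt3}S_1$, and taking suprema yields $\tfrac1{\sqrt3}\sup S_0=\sqrt{\tfrac23}\sup S_1$ with $S_0=S_1$, a contradiction. Your final compactness/largest-eigenvalue step is the right flavor (it mirrors this supremum comparison), but to make your route work you would at minimum need two tests extracting states with distinct spectra and an explicit embezzlement construction realizing the required cross-correlation in the limit --- at which point you have reconstructed the paper's (and Coladangelo's) argument rather than found an alternative to it.
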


This result is an improvement over~\cite{Col19} that shows this separation for larger parameters $(n_A, n_B, m_A, m_B)=(5, 6, 3, 3)$, but is not comparable to~\cite{Dykema+18, MusatRodram19}.  

Our proof idea of this result is similar to~\cite{Col19} and is based on self-testing and entanglement embezzlement. The only difference with~\cite{Col19} is that in order to self-test the maximally entangled state of Schmidt rank $3$, instead of the protocol of~\cite{Coladangelo18}, we use the protocol of~\cite{Sarkar+19} which allows to self-test this state with question sets of size $2$ and answer sets of size $3$. 

As mentioned above, our proof of $\cC_\qs^{(4, 4, 3, 3)}\neq \cC_\qa^{(4, 4, 3, 3)}$ is based on the recent result of~\cite{Sarkar+19}. Sarkar \emph{et al.}~in this paper show that any maximally entangled state of Schmidt rank $d$ can be self-tested via the Bell inequality of~\cite{SATWAP}.   
We describe this Bell inequality in details in the following section. We also present a proof of this self-testing property in Appendix~\ref{app:satwap} that is shorter than the original proof in~\cite{Sarkar+19} and may enlighten our understanding of the aforementioned Bell inequality.

In the next section we describe two self-testing protocols that will be used in the proofs of our main results. In the following two sections we will prove Theorem~\ref{thm:C-q-C-qs} and Theorem~\ref{thm:C-qs-qa-0}.

\section{Self-testing protocols}\label{sec:self-test}

Recall that the CHSH inequality states that for binary observables $A_s, B_t$, $s, t\in \{0,1\}$, with $\pm 1$ values we have
$$\langle A_0B_0\rangle + \langle A_0B_1\rangle + \langle A_1B_0\rangle-\langle A_1B_1\rangle\leq 2,$$
where $\langle AB\rangle$ is the expectation value of the observable $AB$. 
The CHSH inequality belongs to a family of such inequalities called the tilted CHSH inequalities~\cite{Acin+12}, which given a parameter $\beta\in [0,2)$ states that
$$\beta\langle A_0\rangle +\langle A_0B_0\rangle + \langle A_0B_1\rangle + \langle A_1B_0\rangle-\langle A_1B_1\rangle\leq 2+\beta.$$
Observe that for $\beta=0$ we recover the standard CHSH inequality. Moreover, similar to the CHSH inequality, the tilted CHSH inequality is violated in the quantum regime.  Indeed, there are quantum binary observables $A_s, B_t$, $s, t\in \{0,1\}$ and an entangled state $\ket \phi$ such that 
\begin{align}\label{eq:q-t-chsh}
\big\langle\phi\big|\big(\beta A_0\otimes I +A_0\otimes B_0 +  A_0\otimes B_1 +  A_1\otimes B_0- A_1\otimes B_1\big)\big|\phi\big\rangle=\sqrt{8+2\beta^2}.
\end{align}
Moreover, $\sqrt{8+2\beta^2}$ is the maximum value that can be gained by a quantum strategy. 

Let us describe the observables and the shared state that give~\eqref{eq:q-t-chsh}. Given $\beta\in [0, 2)$ define $\theta, \mu\in (0, \pi/4]$,  and $\alpha\in (0,1]$ by
$$\tan(\mu)=\sin (2\theta) = \sqrt{\frac{4- \beta^2}{4+\beta^2}}, \qquad \quad\alpha=\tan (\theta).$$
Let 
\begin{align}\label{eq:phi}
\ket{\Phi^{(\alpha)}} =\cos \theta\ket{00} + \sin\theta\ket{11} =\frac{1}{\sqrt{1+\alpha^2}} (\ket{00} + \alpha\ket{11}),
\end{align}
be a two-qubit state. Also let $\sigma_z$ and $\sigma_x$ be the Pauli operators and define
\begin{align}\label{eq:sigma-beta}
\sigma_{z}^{(\alpha)} := \cos \mu \sigma_z + \sin \mu \sigma_x, \qquad \sigma_{x}^{(\alpha)} = \cos \mu \sigma_z - \sin\mu \sigma_x.
\end{align}
Observe that $\sigma_{z}^{(\alpha)}, \sigma_{x}^{(\alpha)}$ are also binary observables. 
Next let 
$$A_0=\sigma_z, \qquad A_1=\sigma_x, \qquad B_0=\sigma_{z}^{(\alpha)}, \qquad B_1=\sigma_{x}^{(\alpha)}.$$
Then with the shared state $\ket{\Phi^{(\alpha)}}$ in~\eqref{eq:phi} and these four observables we obtain~\eqref{eq:q-t-chsh}.

Surprisingly, the above shared state and observables  represent essentially the unique strategy that achieves~\eqref{eq:q-t-chsh}. This is called the self-testing property of the tilted CHSH inequality and is  proven in~\cite{YN13, BampsPironio15}.

\begin{theorem}{\rm{(Self-testing of tilted CHSH~\cite{YN13, BampsPironio15})}}\label{thm:self-t-CHSH}
Let $\ket\psi_{\mathsf A\mathsf B}\in \cH_{\mathsf A}\otimes \cH_{\mathsf B}$ be a bipartite entangled state and $A_s, B_t$, $s, t\in \{0,1\}$ be binary observables (with $A_s^2=I,  B_t^2=I$) such that~\eqref{eq:q-t-chsh} holds.  Let $\widetilde \cH_{\mathsf A}= \supp(\tr_{\mathsf B}\ket\psi\bra\psi)$  and  $\widetilde\cH_{\mathsf B}= \supp(\tr_{\mathsf A}\ket\psi\bra\psi)$ where $\supp(X)$ is the support of the operator $X$. Then
$\widetilde\cH_{\mathsf A}$ is invariant under $A_s$, $s=0,1$ and $\widetilde \cH_{\mathsf B}$ is invariant under $B_t$, $t=0, 1$. Thus we can let $\widetilde A_s, \widetilde B_t$ be the restrictions of $A_s, B_t$ to these invariant subspaces respectively.
Then, there are auxiliary Hilbert spaces $\cH_{\mathsf A'}, \cH_{\mathsf B'}$, \emph{invertible isometries} $U:\widetilde\cH_{\mathsf A}\to \mathbb C^2\otimes \cH_{\mathsf A'} $ and $V:\widetilde\cH_{\mathsf B}\to \mathbb C^2 \otimes \cH_{\mathsf B'}$ and a state $\ket{\psi'}\in \cH_{\mathsf A'}\otimes \cH_{\mathsf B'}$ such that
\begin{itemize}
\item $U\otimes V\ket \psi = \ket{\Phi^{(\alpha)}}\otimes \ket{\psi'},$
\item $U \widetilde A_0U^\dagger = \sigma_z\otimes I_{\mathsf A'}$, $U \widetilde A_1U^\dagger = \sigma_x\otimes I_{\mathsf A'}$,
\item $V\widetilde B_0V^\dagger = \sigma_{z}^{(\alpha)}\otimes I_{\mathsf B'}$, $V\widetilde B_1V^\dagger = \sigma_{x}^{(\alpha)}\otimes I_{\mathsf B'}$.
\end{itemize}
Here $\ket{\Phi^{(\alpha)}}$ is given in~\eqref{eq:phi}, $\sigma_{z}^{(\alpha)}$ and $\sigma_{x}^{(\alpha)}$ are defined in~\eqref{eq:sigma-beta}, and $I_{\mathsf A'}, I_{\mathsf B'}$ are the identity operators acting on $\cH_{\mathsf A'} $ and $\cH_{\mathsf B'}$ respectively. 
\end{theorem}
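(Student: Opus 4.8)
The plan is to reduce to qubits via Jordan's lemma and then analyze the resulting $4\times 4$ operators directly. Write $\cB_\beta := \beta A_0\otimes I + A_0\otimes B_0 + A_0\otimes B_1 + A_1\otimes B_0 - A_1\otimes B_1$ for the tilted CHSH operator, so that the hypothesis reads $\la\psi|\cB_\beta|\psi\ra = \sqrt{8+2\beta^2}$ while $\cB_\beta\preceq\sqrt{8+2\beta^2}\,I$ always holds. Since $A_0,A_1$ are Hermitian involutions, Jordan's lemma decomposes $\cH_{\mathsf A}$ into jointly $A_0,A_1$-invariant subspaces of dimension one or two; a short check (the value on any block-pair involving a one-dimensional Jordan block is bounded by $2+\beta<\sqrt{8+2\beta^2}$) shows we may restrict to pairs of two-dimensional blocks. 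In each two-dimensional block we choose the eigenbasis of $A_0$ and then rotate about the $z$-axis, obtaining $A_0=\sigma_z$ and $A_1=\cos\gamma\,\sigma_z+\sin\gamma\,\sigma_x$ for a block-dependent angle $\gamma$; bundled together this is an invertible isometry $U_0$ of the span of these blocks onto $\mathbb C^2\otimes\cH_{\mathsf A'}$ sending $A_0\mapsto\sigma_z\otimes I$. Doing the analogous thing on Bob's side, but choosing the basis so that $B_0=\cos\nu\,\sigma_z+\sin\nu\,\sigma_x$ and $B_1=\cos\nu\,\sigma_z-\sin\nu\,\sigma_x$ (symmetric about the $z$-axis, $\nu\in(0,\pi/2)$), rewrites $\cB_\beta$ under $U_0\otimes V_0$ as a direct sum (a direct integral, in infinite dimensions) of the $4\times 4$ operators $\cB_\beta^{(\gamma,\nu)}=\beta\,\sigma_z\otimes I+2\cos\nu\,\sigma_z\otimes\sigma_z+2\sin\nu\,(\cos\gamma\,\sigma_z+\sin\gamma\,\sigma_x)\otimes\sigma_x$.

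Next I would maximize the largest eigenvalue of $\cB_\beta^{(\gamma,\nu)}$ over $(\gamma,\nu)$, showing it equals $\sqrt{8+2\beta^2}$ and is attained only at $\gamma=\pi/2$ (so that $A_1$ restricts to $\sigma_x$ on the block) and $\nu=\mu$, the angle defined above by $\tan\mu=\sin2\theta=\sqrt{(4-\beta^2)/(4+\beta^2)}$. For $\gamma=\pi/2$ the operator block-diagonalizes on $\spn\{\ket{00},\ket{11}\}$ and $\spn\{\ket{01},\ket{10}\}$; on the first of these it is the $2\times 2$ matrix with diagonal $(\beta+2\cos\nu,\,-\beta+2\cos\nu)$ and off-diagonal $2\sin\nu$, whose larger eigenvalue $2\cos\nu+\sqrt{\beta^2+4\sin^2\nu}$ is maximal over $\nu\in(0,\pi/2)$ at the unique interior critical point $\cos^2\nu=(4+\beta^2)/8$, i.e.\ $\nu=\mu$, where it equals $\sqrt{8+2\beta^2}$ with eigenvector exactly $\ket{\Phi^{(\alpha)}}$ of~\eqref{eq:phi}; the cases $\gamma\neq\pi/2$ fall to the same $4\times 4$ diagonalization and give strictly smaller values. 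The point that needs care is the \emph{simplicity} of this optimal eigenvalue for every $\beta\in[0,2)$: $\cos\mu>0$ separates the two $2\times 2$ blocks and $\sqrt{\beta^2+4\sin^2\mu}>0$ separates the two eigenvalues inside the relevant one, so the top eigenspace of $\cB_\beta^{(\pi/2,\mu)}$ is exactly $\mathbb C\,\ket{\Phi^{(\alpha)}}$.

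The structural statement then follows. Since $\cB_\beta=\bigoplus_{\gamma,\nu}\cB_\beta^{(\gamma,\nu)}\preceq\sqrt{8+2\beta^2}\,I$ and $\la\psi|\cB_\beta|\psi\ra$ attains the bound, $(U_0\otimes V_0)\ket\psi$ is supported only on block-pairs with $\gamma=\pi/2,\nu=\mu$ and equals $\ket{\Phi^{(\alpha)}}$ up to scalar inside each of them; hence $(U_0\otimes V_0)\ket\psi=\ket{\Phi^{(\alpha)}}\otimes\ket{\psi'}$ for a unit vector $\ket{\psi'}\in\cH_{\mathsf A'}\otimes\cH_{\mathsf B'}$, where now $\cH_{\mathsf A'},\cH_{\mathsf B'}$ denote the spans of the blocks that actually occur. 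Because $\ket{\Phi^{(\alpha)}}$ has Schmidt rank two, $\widetilde\cH_{\mathsf A}=\supp(\tr_{\mathsf B}\ket\psi\bra\psi)$ corresponds under $U_0$ to $\mathbb C^2\otimes\cH_{\mathsf A'}$ and $\widetilde\cH_{\mathsf B}$ to $\mathbb C^2\otimes\cH_{\mathsf B'}$; there $A_0,A_1$ act as $\sigma_z\otimes I,\sigma_x\otimes I$ and $B_0,B_1$ as $\sigma_z^{(\alpha)}\otimes I,\sigma_x^{(\alpha)}\otimes I$ (cf.~\eqref{eq:sigma-beta}), so these subspaces are invariant under the respective observables and the restrictions $U,V$ of the Jordan isometries to them are the desired invertible isometries, for which the three displayed identities hold by construction.

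The genuine obstacle is the qubit analysis of the second paragraph: pinning down the maximizer $(\gamma,\nu)=(\pi/2,\mu)$ over \emph{all} angle pairs, and above all proving non-degeneracy of the optimal eigenvalue for every $\beta\in[0,2)$ — any degeneracy would let an optimal block carry a two-parameter family of states and would destroy the uniqueness of $\ket{\Phi^{(\alpha)}}$. A secondary, more bookkeeping, issue is making the infinite-dimensional version of Jordan's lemma precise (realizing $\cH_{\mathsf A}$ and $\cH_{\mathsf B}$ as measurable fields of two-dimensional spaces with measurable angle functions $\gamma,\nu$) and checking that the fiberwise bases can be chosen so that on optimal blocks the observables become \emph{exactly} $\sigma_z,\sigma_x$ and $\sigma_z^{(\alpha)},\sigma_x^{(\alpha)}$ rather than merely unitarily conjugate to them. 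An alternative route that bypasses the spectral analysis is to exhibit a sum-of-squares decomposition $\sqrt{8+2\beta^2}\,I-\cB_\beta=\sum_i M_i^\dagger M_i$ with each $M_i$ affine in the observables, deduce $M_i\ket\psi=0$, and build the swap isometry from the resulting operator relations; there the difficulty shifts to guessing the right $M_i$ and verifying that the induced relations close into the qubit algebra.
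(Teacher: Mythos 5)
The paper does not actually prove Theorem~\ref{thm:self-t-CHSH}: it imports it from~\cite{YN13, BampsPironio15}, and the proofs there --- like the paper's own Appendix~\ref{app:satwap} proof of the analogous SATWAP statement --- work via a sum-of-squares decomposition of the shifted Bell operator. Your Jordan-lemma route is therefore a genuinely different approach. It buys an explicit spectral picture: since $\cB_\beta\preceq\sqrt{8+2\beta^2}\,I$, saturation forces $\ket\psi$ to be a top eigenvector, the eigenvector problem splits over $4\times4$ blocks, and the optimal block is found by elementary calculus. The price is the continuous maximization over $(\gamma,\nu)$ and the direct-integral bookkeeping in infinite dimensions, both of which the SOS route sidesteps. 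Your overall architecture is sound: $1$-dimensional blocks are killed by $2+\beta<\sqrt{8+2\beta^2}$ (valid since $(2-\beta)^2>0$ for $\beta<2$); at $(\gamma,\nu)=(\pi/2,\mu)$ the even-parity $2\times2$ block has simple top eigenvalue $\sqrt{8+2\beta^2}$ with eigenvector $\ket{\Phi^{(\alpha)}}$, and your non-degeneracy check is correct (the odd-parity block tops out at $-2\cos\mu+\sqrt{\beta^2+4\sin^2\mu}=0$, and the gap inside the even block is $2\sqrt{\beta^2+4\sin^2\mu}>0$); support of the maximizer on optimal blocks then yields the tensor factorization, the invariance of $\widetilde\cH_{\mathsf A},\widetilde\cH_{\mathsf B}$, and the surjectivity of $U,V$ onto $\mathbb C^2\otimes\cH_{\mathsf A'}$, $\mathbb C^2\otimes\cH_{\mathsf B'}$ once $\cH_{\mathsf A'},\cH_{\mathsf B'}$ are cut down to the supports of the marginals of $\ket{\psi'}$.

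The one concrete flaw is the claim that ``the cases $\gamma\neq\pi/2$ fall to the same $4\times4$ diagonalization.'' They do not: for $\gamma\neq\pi/2$ the block operator acquires the cross term $2\sin\nu\cos\gamma\,\sigma_z\otimes\sigma_x$, which does \emph{not} preserve the parity subspaces $\spn\{\ket{00},\ket{11}\}$ and $\spn\{\ket{01},\ket{10}\}$, so the reduction to two $2\times2$ matrices is unavailable exactly in the case you need to exclude. Ruling out off-axis $A_1$ is the heart of the self-testing statement and requires a separate argument --- either a genuine $4\times4$ eigenvalue computation, or an operator bound on $\cB_\beta^2$ in which the anticommutator $A_0A_1+A_1A_0=2\cos\gamma\,I$ appears and must be shown to force $\cos\gamma=0$; this is precisely the work that the SOS decomposition of~\cite{BampsPironio15} packages into the identities $M_i\ket\psi=0$. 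You honestly flag this maximization (and non-degeneracy) as the remaining obstacle, but as written the sketch asserts rather than establishes the step, and the mechanism it names would fail.
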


There are some remarks in line.

\begin{remark}
In the statement of the theorem we assume that $U, V$ are not only isometries, but also invertible, \emph{i.e.}, they are onto. Indeed, we claim that for such isometries $U, V$, their images $U\widetilde\cH_{\mathsf A}$ and $V\widetilde \cH_{\mathsf B}$ are always of the form $\mathbb C^2 \otimes\cH'_{\mathsf A'}$ and $\mathbb C^2\otimes \cH'_{\mathsf B'}$ respectively, so that we may identify $\cH_{\mathsf A'}$ and $\cH_{\mathsf B'}$ in the statement of the theorem by $\cH'_{\mathsf A'}$ and $\cH'_{\mathsf B'}$ respectively. To prove this claim, we compute
\begin{align*}
U\tr_{\mathsf B}\big( \ket \psi \bra \psi\big)U^\dagger &= \tr_{\mathsf B}\big( U\otimes I\ket \psi \bra \psi U^\dagger\otimes I\big) \\
&=  \tr_{\mathbb C^2\otimes \mathsf B'}\big( U\otimes V\ket \psi \bra \psi U^\dagger\otimes V^\dagger\big)\\
& = \tr_{\mathbb C^2\otimes \mathsf B'}\big(\ket{\Phi^{(\alpha)}}\bra {\Phi^{(\alpha)}}\otimes \ket{\psi'}\bra{\psi'}\big)\\
& =  \cos^2\theta\big( \ket0\bra 0+ \alpha^2\ket 1\bra 1 \big)\otimes \tr_{\mathsf B'}\big(\ket{\psi'}\bra{\psi'}\big).
\end{align*}
where we used $V^\dagger V= I$, \emph{i.e.}, $V$ is an isometry, and the cyclic property of trace. This means that $U$ sends $\widetilde \cH_{\mathsf A'} = \supp \big(\tr_{\mathsf B} \ket \psi\bra \psi\big)$ to the support of the partial trace of $\ket{\Phi^{(\alpha)}}\otimes \ket{\psi'}$ which of course is of the form $\mathbb C^2 \otimes\cH'_{\mathsf A'}$. The proof for $V$ is similar. 
\end{remark}

\begin{remark}\label{remark:2}
Suppose that $\ket\psi$ in the statement of the theorem has finite Schmidt rank. Then by definition $\Schrank(\ket\psi) = \dim \widetilde \cH_{\mathsf A} = \dim \widetilde \cH_{\mathsf B}$, where $\Schrank(\ket\psi)$ is the Schmidt-rank of $\ket\psi$. On the other hand, 
$$\Schrank(\ket\psi)  = \Schrank(U\otimes V\ket\psi) = \Schrank(\ket {\Phi_\alpha})\cdot \Schrank(\ket{\psi'}).$$
Then using the fact that $U,V$ are invertible and by comparing dimensions, $\Schrank(\ket{\psi'}) =  \dim \cH_{\mathsf A'} = \dim \cH_{\mathsf B'}$ and $\tr_{\mathsf A'}(\ket{\psi'}\bra{\psi'}), \tr_{\mathsf B'}(\ket {\psi'}\bra{\psi'})$ are full-rank. 
\end{remark}

\begin{remark}
We note that the operators $A_s, B_t$, $s, t\in \{0,1\}$ may act on larger spaces than $\widetilde\cH_{\mathsf A}= \supp(\tr_{\mathsf B}\ket\psi\bra\psi)$  and  $\widetilde\cH_{\mathsf B}= \supp(\tr_{\mathsf A}\ket\psi\bra\psi)$, and the fact that the tilted CHSH inequality is maximally violated does not give any information about the action of these operators on the complementary subspaces. Because of this, in the statement of the theorem we had to replace $A_s, B_t$ with their restricted versions $\widetilde A_s, \widetilde B_t$.  
\end{remark}

By Theorem~\ref{thm:self-t-CHSH} any entangled state of Schmidt rank $2$ can be self-tested. It is then shown in~\cite{CGS17} that in fact any entangled state of finite Schmidt rank can be self-tested. In  the special case of the maximally entangled state of Schmidt rank $d$, it is shown that it can be self-tested via a correlation in $\cC_\q^{3, 4, d, d}$~\cite{Coladangelo18}.  However, one would expect to get self-testing protocols for maximally entangled states with smaller question and answer sets, particularly with $n_A=n_B=2$ and $m_A=m_B=d$. This problem was resolved recently in~\cite{Sarkar+19}. To state this result we need to develop some notation.

Binary observables are usually represented by unitary operators $A$ with eigenvalues $\pm 1$ (equivalently with $A^2=I$). Indeed, to a binary projective measurement $\{P_0, P_1\}$ one can associate the binary observable $A=P_0-P_1$. Conversely, having such $A$, the operators of the associated projective measurement are given by $P_0 =(I+A)/2 $ and $P_1=(I-A)/2$. We denote the projection on the $(-1)^{a}$ eigenspace of $A$ by $A^{(a)} = P_a= (I+(-1)^aA)/2$ for $a\in \{0,1\}$.
We can apply the same idea to represent \emph{$d$-valued measurements}, \emph{i.e.}, measurements with outcomes in $\{0,1, \dots, d-1\}$. 

Let $\{P_0, \dots, P_{d-1}\}$ be a $d$-valued projective measurement. Then define
$$A= \sum_{k=0}^{d-1} \omega^k P_k,$$
where $\omega=e^{2\pi \mathrm i/d}$ is a $d$-th root of unity. Observe that $A$ is a unitary and $A^d=I$. Conversely, eigen-decomposition of any unitary $A$ with $A^d=I$ corresponds to a $d$-valued projective measurement with measurement operators:
$$P_a= \frac{1}{d} \sum_{\ell=0}^{d-1} \omega^{- a\ell} A^\ell,\quad \qquad 0\leq a\leq d-1.$$
As in the binary case, we denote by $A^{(a)}=P_a$, for $0\leq a\leq d-1$, the projection on the eigenspace of $A$ with eigenvalue $\omega^a$. 
As a result, similar to the binary case, any Bell inequality with $d$-valued measurements, can be written in terms of unitary operators $A_i, B_j$ with $A_i^d=B_j^d=1$, and their powers. We call such unitaries \emph{$d$-valued observables}.

We now state our desired Bell inequality which following~\cite{Sarkar+19} we call the \emph{SATWAP Bell inequality}~\cite{SATWAP}. Let $A_0, A_1$ be Alice's observables with $A_0^d=A_1^d=I$ and $B_0, B_1$ be Bob's observables with $B^d_0=B^d_1=I$ as above. Then the SATWAP Bell operator is given by
\begin{align}\label{eq:SATWAP-op}
\mathcal{O}_d = \sum_{k=1}^{d-1} \Big(  r_k A_0^k\otimes B_0^{-k} + \bar{r}_k \omega^k A_0^k\otimes B_1^{-k} + \bar{r}_k A_1^k\otimes B_0^{-k}  + r_k A_1^k\otimes B_1^{-k}   \Big),
\end{align}
where as before $\omega=2^{2\pi \mathrm i/d}$,
$$r_k = \frac{1}{\sqrt 2} \omega^{\frac{2k-d}{8}} = \frac{1-\mathrm i}{2}\omega^{\frac{k}{4}},$$
and $\bar r_k$ is the complex conjugate of $r_k$ given by $\bar r_k = r_{d-k}$. It is known that the maximum of $\mathcal O_d$ in the local hidden variable model, \emph{i.e.}, in $\cC_{\LHV}^{(2, 2, d, d)}$, equals $\big[ 2\cot(\pi/4d)-\cot(3\pi/4d) -4\big]/2$. Also, for any entangled state $\ket\psi$ we have
$$\langle \psi | \mathcal O_d\ket\psi \leq 2(d-1).$$
Moreover, there is essentially a unique strategy to saturate the above inequality. 

\begin{theorem}{\rm{(Self-testing of SATWAP~\cite{Sarkar+19})}} \label{thm:satwap}
\begin{enumerate}
\item[{\rm{(i)}}] Let $\ket{\Phi_d} = \frac{1}{\sqrt d}\sum_{i=0}^{d-1} \ket i\ket i$ be the maximally entangled state of Schmidt rank $d$. Let 
$$Z=\sum_{i=0}^{d-1} \omega^i \ket i\bra i,$$
be the generalized $\sigma_z$-Pauli operator. Also let  $\ket J= \frac{1}{\sqrt d} \sum_{i=0}^{d-1}\ket i$.
Then letting $\ket\psi=\ket{\Phi_d}$ and
\begin{align}
&A_0=\omega^{-1/4} Z\Big(I-(1-\mathrm i)\ket J\bra J \Big),\label{eq:A1-satwap}\\
& A_1 =\omega^{1/4}Z\Big(I-(1+\mathrm i)\ket J\bra J \Big) ,\label{eq:A2-satwap}\\
&B_0= Z, \label{eq:B1-satwap}\\
& B_1 =\omega^{1/2} \Big(I-2\ket J\bra J\Big)Z,\label{eq:B2-satwap}
\end{align}
we obtain $\langle \Phi_d| \mathcal O_d\ket {\Phi_d} = 2(d-1)$.

\item[{\rm{(ii)}}] Conversely, suppose that $\ket\psi\in \cH_{\mathsf A}\otimes \cH_{\mathsf B}$ and $d$-valued observables $A_0, A_1$ and $B_0, B_1$ acting on $\cH_{\mathsf A}$ and $\cH_{\mathsf B}$ respectively, are such that 
$$\langle \psi| \mathcal O_d\ket \psi=2(d-1).$$
Then $\widetilde \cH_{\mathsf A}= \supp(\tr_{\mathsf B}\ket\psi\bra\psi)$  and  $\widetilde\cH_{\mathsf B}= \supp(\tr_{\mathsf A}\ket\psi\bra\psi)$  are invariant under  $A_s$ and $B_t$, $s, t\in \{0, 1\}$ respectively. Moreover, there are auxiliary Hilbert spaces $\cH_{\mathsf A'}$ and $\cH_{\mathsf B'}$ and \emph{invertible isometries} $U: \widetilde \cH_{\mathsf A}\to \mathbb C^d\otimes \cH_{\mathsf A'}$ and $V: \widetilde \cH_{\mathsf B}\to \mathbb C^d\otimes \cH_{\mathsf B'}$ such that 
\begin{itemize}
\item $U\otimes V \ket \psi = \ket{\Phi_d}\otimes \ket{\psi'}$ where $\ket{\psi'}\in \cH_{\mathsf A'}\otimes \cH_{\mathsf B'}$,
\item $U A_0 U^{\dagger}=  \omega^{-1/4}  Z\big(I-(1-\mathrm i) \ket J\bra J\big) \otimes I_{\mathsf B'},$

\item $U A_1 U^{\dagger}=  \omega^{1/4}  Z\big(I-(1+\mathrm i) \ket J\bra J\big) \otimes I_{\mathsf B'},$

\item $V\widetilde B_0V^\dagger= Z\otimes I_{\mathsf B'}$,
\item $V\widetilde B_1V^\dagger   = \omega^{1/2}\big (I - 2\ket J\bra J\big) Z\otimes I_{\mathsf B'},$
\end{itemize}
where $\widetilde A_s, \widetilde B_t$ are the restrictions of $A_s, B_t$ to the invariant subspaces $\widetilde\cH_{\mathsf A}$ and $\widetilde\cH_{\mathsf B}$ respectively.

\end{enumerate}
\end{theorem}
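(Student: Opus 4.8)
For part~(i) the plan is a direct computation, made manageable by first recording closed forms for the powers of the ideal observables. Since the vectors $Z^{-j}\ket J$ are pairwise orthogonal, conjugating $\ket J\bra J$ by powers of $Z$ yields pairwise orthogonal rank‑one projections, so that
\[
A_0^k=\omega^{-k/4}Z^k\bigl(I-(1-\mathrm i)\Pi_k\bigr),\quad A_1^k=\omega^{k/4}Z^k\bigl(I-(1+\mathrm i)\Pi_k\bigr),\quad B_0^{-k}=Z^{-k},\quad B_1^{-k}=\omega^{-k/2}Z^{-k}\bigl(I-2\Pi_k'\bigr),
\]
for appropriate rank‑$k$ orthogonal projections $\Pi_k,\Pi_k'$. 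Using $\langle\Phi_d|M\otimes N|\Phi_d\rangle=\tfrac1d\tr(MN^{\top})$ together with $(Z^k)^{\top}=Z^k$ and the fact that the transpose of such a projection is again a projection of the same rank, each of the four terms with a fixed $k$ reduces to an elementary trace of the shape $\tfrac1d\tr(I-c\,\Pi_k)$; the four of them sum to $2$, whence $\langle\Phi_d|\mathcal O_d|\Phi_d\rangle=2(d-1)$.

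For part~(ii) I would start from a sum‑of‑squares certificate for the upper bound. Group the Bell operator as $\mathcal O_d=\sum_{k=1}^{d-1}T_k$ with $T_k=A_0^k\otimes C_k+A_1^k\otimes D_k$, where $C_k=r_kB_0^{-k}+\bar r_k\omega^kB_1^{-k}$ and $D_k=\bar r_kB_0^{-k}+r_kB_1^{-k}$. A short calculation with $|r_k|^2=\tfrac12$ and $r_k^2=-\tfrac{\mathrm i}{2}\omega^{k/2}$ gives the two facts that carry the proof: $C_k^\dagger C_k+D_k^\dagger D_k=2I$, and $T_{d-k}=T_k^\dagger$ (so $\mathcal O_d$ is Hermitian). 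Writing $\langle\psi|T_k|\psi\rangle=\langle(A_0^{-k}\otimes I)\psi,(I\otimes C_k)\psi\rangle+\langle(A_1^{-k}\otimes I)\psi,(I\otimes D_k)\psi\rangle$ and applying Cauchy--Schwarz first to each summand and then to the resulting pair of norms yields $|\langle\psi|T_k|\psi\rangle|\le\sqrt{2\langle\psi|(C_k^\dagger C_k+D_k^\dagger D_k)|\psi\rangle}=2$, hence $\langle\psi|\mathcal O_d|\psi\rangle=\sum_k\mathrm{Re}\,\langle\psi|T_k|\psi\rangle\le 2(d-1)$. Unwinding the equality conditions in these two applications of Cauchy--Schwarz, saturation forces for every $k=1,\dots,d-1$ the stabilizing relations
\[
(A_0^{-k}\otimes I)\ket\psi=(I\otimes C_k)\ket\psi,\qquad (A_1^{-k}\otimes I)\ket\psi=(I\otimes D_k)\ket\psi.
\]

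From here the plan is the usual rigidity argument. Since the $2\times2$ coefficient matrix is invertible ($\det=-\mathrm i\,\omega^{k/2}$), one solves for $(I\otimes B_0^{-k})\ket\psi$ and $(I\otimes B_1^{-k})\ket\psi$ as explicit linear combinations of $(A_0^{-k}\otimes I)\ket\psi$ and $(A_1^{-k}\otimes I)\ket\psi$. Comparing the $\mathsf A$‑ and $\mathsf B$‑marginals of both sides of these relations shows that $\widetilde\cH_{\mathsf A}$ and $\widetilde\cH_{\mathsf B}$ are invariant under the local observables (the identities $A_s^d=B_t^d=I$ turn the subspace inclusions one obtains into equalities), so $\widetilde A_s,\widetilde B_t$ are $d$‑valued observables. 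Then the lifting principle ``$(N\otimes I)\ket\psi=0$ forces $N$ to vanish on $\widetilde\cH_{\mathsf A}=\supp(\tr_{\mathsf B}\ket\psi\bra\psi)$'' (valid since $\ket\psi$ has full Schmidt rank there), applied to the consistency relations obtained by composing the stabilizing relations and commuting $A$‑ with $B$‑operators, produces operator identities on the supports: $\widetilde C_1^{\,k}=\widetilde C_k$ and $\widetilde D_1^{\,k}=\widetilde D_k$ on $\widetilde\cH_{\mathsf B}$, the simplest being $(\widetilde B_0^{-1}-\omega^{1/2}\widetilde B_1^{-1})^2=0$. Using $C_d=I$ and $C_{d-k}=C_k^\dagger$ one deduces $\widetilde C_1^{\,d}=I$ and $\widetilde C_1^\dagger=\widetilde C_1^{-1}$, so $\widetilde C_1,\widetilde D_1$ are $d$‑valued observables and $\widetilde B_0,\widetilde B_1$ are recovered as explicit combinations of them; symmetrically, $\widetilde\cH_{\mathsf A}$ carries $d$‑valued observables playing the roles of the local pictures of $\widetilde B_0,\widetilde B_1$.

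It remains to manufacture a generalized shift and then conclude. The element $\Pi_{\mathsf B}:=\tfrac12\bigl(I-\omega^{-1/2}\widetilde B_1\widetilde B_0^{-1}\bigr)$ (equal to $\ket J\bra J\otimes I$ in the ideal strategy) should be shown, using the operator identities above, to be an orthogonal projection whose conjugates $\widetilde B_0^{\,j}\Pi_{\mathsf B}\widetilde B_0^{-j}$, $0\le j\le d-1$, are pairwise orthogonal and sum to $I$; then $\bar X_{\mathsf B}:=\sum_{j=0}^{d-1}\omega^{-j}\widetilde B_0^{\,j}\Pi_{\mathsf B}\widetilde B_0^{-j}$ satisfies $\bar X_{\mathsf B}^d=I$ and $\widetilde B_0\bar X_{\mathsf B}=\omega\bar X_{\mathsf B}\widetilde B_0$, and similarly on Alice's side. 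Since the $*$‑algebra generated by two operators with $Z^d=X^d=I$ and $ZX=\omega XZ$ is $M_d(\mathbb C)$, this forces $\widetilde\cH_{\mathsf A}\cong\mathbb C^d\otimes\cH_{\mathsf A'}$ and $\widetilde\cH_{\mathsf B}\cong\mathbb C^d\otimes\cH_{\mathsf B'}$ with $\widetilde B_0=Z\otimes I$, $\bar X_{\mathsf B}=X\otimes I$ (and the analogous statement on the $\mathsf A$ factor); plugging these into the linear formulas recovers the claimed normal forms for $\widetilde A_s,\widetilde B_t$, and feeding those back into the stabilizing relations pins $\ket\psi$ down to $\ket{\Phi_d}\otimes\ket{\psi'}$ after absorbing a residual local unitary on $\cH_{\mathsf A'}\otimes\cH_{\mathsf B'}$ into $U$ and $V$. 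The hard part is precisely this step — showing $\Pi_{\mathsf B}$ is an orthogonal projection and that its $\widetilde B_0$‑conjugates are pairwise orthogonal and complete, equivalently that the universal $C^*$‑algebra generated by two $d$‑valued observables subject to the extracted relations is $M_d(\mathbb C)$ (this is where the dimension ``$d$'' is forced). Everything else — the sum‑of‑squares certificate, the arithmetic of part~(i), and the lifting lemma — is routine, and it is this representation‑theoretic identification, carried out uniformly in $d$, that does the real work and in which the argument should be shorter than the computation in~\cite{Sarkar+19}.
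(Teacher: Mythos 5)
Your outline follows essentially the same route as the paper's proof: the per-$k$ identity $C_{0,k}^\dagger C_{0,k}+C_{1,k}^\dagger C_{1,k}=2I$, the stabilizing relations $(A_s^{-k}\otimes I)\ket\psi=(I\otimes C_{s,k})\ket\psi$ (the paper obtains them by completing squares rather than by Cauchy--Schwarz, but the equality conditions are identical), the invariance-of-supports lemma, the multiplicativity $C_{s,k}=C_{s,1}^{k}$ with $C_{s,1}^{d}=I$, the projection $\Pi_{\mathsf B}$ (which is exactly the paper's $P_1$), the shift $\bar X_{\mathsf B}$ (the paper's $F$), and the identification $\widetilde\cH_{\mathsf B}\cong\mathbb C^d\otimes\cH_{\mathsf B'}$. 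The only real divergences are cosmetic: you would appeal to uniqueness of the irreducible representation of the pair $Z^d=X^d=I$, $ZX=\omega XZ$, where the paper constructs $V$ by hand from the eigenspaces of $\widetilde B_0$ and the isomorphisms $F^{k}:\cH^{(0)}\to\cH^{(k)}$; and for the state the paper notes that $\tr_{\mathsf A}\ket\psi\bra\psi$ commutes with the algebra generated by the $C_{s,1}$, which is all of $M_d(\mathbb C)\otimes I_{\mathsf B'}$, forcing the reduced state to be $\tfrac1d I\otimes\rho'$ --- a cleaner route than ``feeding back into the stabilizing relations.'' Part (i) as you set it up is a routine trace computation and is fine.

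The one place where the proposal is not yet a proof is precisely the step you flag as ``the hard part'' and leave at ``should be shown'': that $\Pi_{\mathsf B}=\tfrac12\bigl(I-\omega^{-1/2}\widetilde B_1\widetilde B_0^{-1}\bigr)$ is an orthogonal projection and that its conjugates $\widetilde B_0^{\ell}\Pi_{\mathsf B}\widetilde B_0^{-\ell}$, $0\le\ell\le d-1$, are pairwise orthogonal and resolve the identity. This is not an open representation-theoretic obstacle: it follows directly from the relations you have already extracted. Since $C_{0,1}=r_1B_0^{-1}\bigl(I+\mathrm i\,\omega^{1/2}B_0B_1^{-1}\bigr)$ and both $C_{0,1}$ and $\sqrt2\,r_1B_0^{-1}$ are unitary, the factor $\tfrac1{\sqrt2}(I+\mathrm i D_1)$ with $D_1=\omega^{1/2}B_0B_1^{-1}$ is unitary; expanding $L_1L_1^\dagger=I$ forces $D_1=D_1^\dagger$, and a self-adjoint unitary is $I-2P_1$ for a projection $P_1=\Pi_{\mathsf B}$. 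Then writing $C_{0,k}=\omega^{k/4}B_0^{-k}\bigl(I-(1+\mathrm i)P_k\bigr)$, expanding $C_{0,k}C_{0,\ell}=C_{0,k+\ell}$ and subtracting the result from its adjoint yields $P_\ell\,B_0^{\ell}P_kB_0^{-\ell}=0$ and $P_{k+\ell}=P_\ell+B_0^{\ell}P_kB_0^{-\ell}$, whence by induction $P_k=\sum_{\ell=0}^{k-1}B_0^{\ell}P_1B_0^{-\ell}$ and, from $C_{0,1}^{d}=I$, the completeness $\sum_{\ell=0}^{d-1}B_0^{\ell}P_1B_0^{-\ell}=I$. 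So the gap is one of execution rather than of idea, but as written the crux of part (ii) is asserted rather than proved, and your reviewer would insist on seeing it carried out.
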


We notice that the unitary operators $A_s, B_t$ given in equations~\eqref{eq:A1-satwap}-\eqref{eq:B2-satwap} differ from those of~\cite{Sarkar+19}. However, this is not hard to verify that they are indeed  equivalent under local unitaries. In fact, this simple representation of the optimal strategy in part (i) of the theorem would help us to attain a simpler proof of the self-testing property in part (ii). We give a proof of this theorem in Appendix~\ref{app:satwap}.

\section{Quantum correlations in finite vs infinite dimensions}

In this section we prove our first main result stated in Theorem~\ref{thm:C-q-C-qs}. Let us recall that $\cC_{\q}^{(n_A, n_B, m_A, m_B)}$ is the set of correlations $p(a, b| s, t)$, with $s, t$ taking $n_A, n_B$ values and $a, b$ taking $m_A, m_B$ values respectively, which have representations of the form~\eqref{eq:quantum} with the local spaces $\cH_A, \cH_B$ being finite dimensional.   $\cC_{\qs}^{(n_A, n_B, m_A, m_B)}$ is defined similarly except that the local Hilbert spaces may be infinite dimensional (yet separable).

Let us start by introducing the shared entangled state to be used for generating the target nonlocal correlation. Let $\cH_{\mathsf Z}$ be the separable Hilbert space with orthonormal basis $\{\ket i_{\mathsf Z}:\, i\in \mathbb Z\}$.\footnote{We represent these basis vectors with the subscript $\mathsf Z$ to distinguish them from the computational basis vectors in $\mathbb C^2$.} 
Define $\ket\Psi\in \cH_{\mathsf Z}\otimes \cH_{\mathsf Z}$ by
\begin{align}\label{eq:psi-Z}
\ket\Psi = \frac{1}{\sqrt C}\sum_{i\in \mathbb Z} \alpha^{|i|} \ket i_{\mathsf Z}\ket i_{\mathsf Z},
\end{align}
where $0<\alpha<1$ is an arbitrary constant, $|i|$ is the absolute value of integer $i$ and $C$ is a normalization factor 
$$C = \sum_{i\in \mathbb Z} \alpha^{2|i|} = \frac{1+\alpha^2}{1-\alpha^2}\,.$$
$\ket \Psi$ can be written as $\frac{1}{\sqrt{1+\alpha^2}} \big(\ket{00} + \alpha\ket{11}\big)\otimes \ket{\psi'}$ up to local isometries in two different ways, one of which by \emph{pairing} the basis states as $\{\ket i_{\mathsf Z}:\, i\in \mathbb Z\} = \cup_{j\in \mathbb Z} \big \{  \ket {2j}_{\mathsf Z}, \ket{2j+1}_{\mathsf Z} \big \}  $ and the other one by pairing them as $\{\ket i_{\mathsf Z}:\, i\in \mathbb Z\} = \cup_{j\in \mathbb Z} \big \{  \ket {2j}_{\mathsf Z}, \ket{2j-1}_{\mathsf Z}  \big\}  $.
To make this more precise let us introduce two isometries 
$$W_{0}, W_2: \mathbb C^2\otimes \cH_{\mathsf Z} \to \cH_{\mathsf Z},$$
given by
\begin{align}\label{eq:W1}
W_0\,\ket{0}\ket j_{\mathsf Z} = 
\begin{cases} 
\ket{2j}_{\mathsf Z} & \quad j\geq 0,\\
\ket{{2j+1}}_{\mathsf Z} & \quad j<0,
\end{cases}
 \qquad \quad 
 W_0\ket1 \ket j_{\mathsf Z} = 
 \begin{cases}
 \big|{2j + 1}\big\rangle_{\mathsf Z} &\quad j\geq 0, \\
\ket{2j}_{\mathsf Z} & \quad j<0,
 \end{cases}
\end{align}
and
\begin{align}\label{eq:W2}
W_2\,\ket{0}\ket j_{\mathsf Z} = 
\begin{cases} 
\ket{2j-1}_{\mathsf Z} & \quad j> 0,\\
\ket{2j}_{\mathsf Z} & \quad j\leq 0,
\end{cases}
 \qquad \quad 
 W_2\ket1 \ket j_{\mathsf Z} = 
 \begin{cases}
 \big|{2j }\big\rangle_{\mathsf Z} &\quad j> 0, \\
\ket{2j-1}_{\mathsf Z} & \quad j\leq 0.
 \end{cases}
\end{align}
Indexing these operators by $0, 2$ (as opposed to $0, 1$) is for later convenience. 
Observe that 
$$W_r^\dagger \otimes W_r^\dagger \ket \Psi = \frac{1}{\sqrt{1+\alpha^2}} \big(\ket{00}+\alpha\ket {11}\big) \otimes \ket{\psi'_r}, \qquad \quad r\in \{0,2\},$$
for some $\ket{\psi'_0}, \ket{\psi'_2}\in \cH_{\mathsf Z}\otimes \cH_{\mathsf Z}$.
Thus using $\ket\Psi$ as the shared state, the tilted CHSH game can be played in two different ways. 
To this end, let us define the observables $A_s, B_t$, $s, t\in \{0,1,2,3\} $ according to Table~\ref{table:1ad}. Then $A_s, B_t$ for $s, t\in \{0,1\}$ generate the tilted CHSH correlation and $A_s, B_t$ for $s, t\in \{2, 3\}$ generate another copy of this correlation.

\begin{table}[t]
\renewcommand{\arraystretch}{2.1}
\begin{center}
  \begin{tabular}{ | c || c  |} 
     \hline 
    $\qquad A_0 = W_0   \big(\sigma_z\otimes I_{}\big) W_0^\dagger\qquad $ & $\qquad B_0= W_0   \big(\sigma_{z}^{(\alpha)}\otimes I_{}\big) W_0^\dagger\qquad$ \\ [.07in] \hline 
    $\qquad A_1 =W_0   \big(\sigma_x\otimes I_{}\big) W_0^\dagger\qquad $ & $\qquad B_1= W_0   \big(\sigma_{x}^{(\alpha)}\otimes I_{}\big) W_0^\dagger\qquad$ \\ [.07in] \hline\hline
    $\qquad A_2= W_2  \big (\sigma_z\otimes I_{}\big) W_2^\dagger\qquad$ & $\qquad B_2= W_2   \big(\sigma_{z}^{(\alpha)}\otimes I_{}\big) W_2^\dagger\qquad$ \\ [.07in]\hline
    $\qquad A_3= W_2   \big(\sigma_x\otimes I_{}\big) W_2^\dagger\qquad$ & $\qquad B_3= W_2   \big(\sigma_{x}^{(\alpha)}\otimes I_{}\big) W_2^\dagger\qquad$\\ [.07in]
    \hline
  \end{tabular}
 \caption{\small $A_s, B_t$, $s, t\in \{0, 1, 2, 3\}$ are binary observables acting on the Hilbert space $\cH_{\mathsf Z}$. Here $\sigma_x, \sigma_z$ are the Pauli matrices, and $\sigma_{x}^{(\alpha)}, \sigma_{z}^{(\alpha)}$ are defined in~\eqref{eq:sigma-beta}. }
 \label{table:1ad}
\end{center}
  \end{table}

As before, for a binary observable $M$ (\emph{i.e.}, an operator with $M^2=I$ and $M=M^\dagger$) let $M^{(a)}$, $a\in \{0,1\}$ be the orthogonal projection on the eigenspace with eigenvalue $(-1)^a$, so that $M=M^{(0)} - M^{(1)}$. Then the correlation generated by the shared state~\eqref{eq:psi-Z} and observables in Table~\ref{table:1ad} is given by
\begin{align}\label{eq:p-abc-t}
p(a, b| s, t) = \bra \Psi A_s^{(a)}\otimes B_t^{(b)}\ket \Psi. 
\end{align}
By the definitions we have $p\in \cC_{\qs}^{(4, 4, 2, 2)}$. 

\begin{theorem}\label{thm:main-1-1}
For any $\alpha\in (0,1)$ the nonlocal correlation $p$ defined above does not belong to $\cC_\q^{(4, 4, 2, 2)}$. In particular, $\cC_{\q}^{(4, 4, 2, 2)}\neq \cC_{\qs}^{(4, 4, 2, 2)}$.
\end{theorem}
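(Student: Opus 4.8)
The plan is to argue by contradiction. Assume $p\in\cC_\q^{(4,4,2,2)}$, so that there are \emph{finite}-dimensional Hilbert spaces $\cH_{\mathsf A},\cH_{\mathsf B}$, a unit vector $\ket\psi$, and binary observables $A_s,B_t$ ($s,t\in\{0,1,2,3\}$) with $p(a,b|s,t)=\bra\psi A_s^{(a)}\otimes B_t^{(b)}\ket\psi$; write $R=\Schrank(\ket\psi)<\infty$, $\widetilde\cH_{\mathsf A}=\supp(\tr_{\mathsf B}\ket\psi\bra\psi)$ and $\widetilde\cH_{\mathsf B}=\supp(\tr_{\mathsf A}\ket\psi\bra\psi)$. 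First I would record, by the direct computation underlying the construction (each $W_r$ turns $\ket\Psi$ into $\ket{\Phi^{(\alpha)}}\otimes\ket{\psi'_r}$ up to local isometries, and the observables in Table~\ref{table:1ad} are the associated tilted-CHSH observables acting trivially on the auxiliary factor), that the restriction of $p$ to the measurements $s,t\in\{0,1\}$ and, separately, to $s,t\in\{2,3\}$ each attains the maximal quantum value $\sqrt{8+2\beta^2}$ of the tilted CHSH inequality in~\eqref{eq:q-t-chsh}, for the $\beta\in[0,2)$ matching $\alpha$.

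Next I would apply the self-testing theorem (Theorem~\ref{thm:self-t-CHSH}) twice, once for the measurements $\{0,1\}$ and once for $\{2,3\}$; since $\widetilde\cH_{\mathsf A},\widetilde\cH_{\mathsf B}$ depend only on $\ket\psi$, both applications use the same subspaces. For $r\in\{0,2\}$ this yields invertible isometries $U_r\colon\widetilde\cH_{\mathsf A}\to\mathbb C^2\otimes\cH_{\mathsf A'_r}$ and $V_r\colon\widetilde\cH_{\mathsf B}\to\mathbb C^2\otimes\cH_{\mathsf B'_r}$, a state $\ket{\psi'_r}$ with $U_r\otimes V_r\ket\psi=\ket{\Phi^{(\alpha)}}\otimes\ket{\psi'_r}$, and the product forms $U_0\widetilde A_0U_0^\dagger=\sigma_z\otimes I$, $U_0\widetilde A_1U_0^\dagger=\sigma_x\otimes I$, $U_2\widetilde A_2U_2^\dagger=\sigma_z\otimes I$, $U_2\widetilde A_3U_2^\dagger=\sigma_x\otimes I$, together with their Bob analogues. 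By Remark~\ref{remark:2} (applicable because $\ket\psi$ has finite Schmidt rank), $R=2\Schrank(\ket{\psi'_r})$, so $R$ is even, $d':=R/2=\Schrank(\ket{\psi'_0})=\Schrank(\ket{\psi'_2})$, the spaces $\cH_{\mathsf A'_r},\cH_{\mathsf B'_r}$ all have dimension $d'$, and the reduced states $\tr_{\mathsf B'_r}\ket{\psi'_r}\bra{\psi'_r},\tr_{\mathsf A'_r}\ket{\psi'_r}\bra{\psi'_r}$ are full rank. In particular every projection $\widetilde A_s^{(a)}$ has rank exactly $d'$.

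The core of the argument is to feed in the \emph{cross}-correlations $p(a,b|s,t)$ with $s\in\{0,1\}$, $t\in\{2,3\}$ (and symmetrically $s\in\{2,3\}$, $t\in\{0,1\}$), which are governed by neither tilted-CHSH game. Conjugated into the copy-$0$ frame they read $\bra{\Phi^{(\alpha)}}\bra{\psi'_0}\,\widehat A_s^{(a)}\otimes\widehat B_t^{(b)}\,\ket{\Phi^{(\alpha)}}\ket{\psi'_0}$ with $\widehat X:=U_0\widetilde X U_0^\dagger$, $\widehat Y:=V_0\widetilde Y V_0^\dagger$; moreover $\widehat A_2,\widehat A_3$ have the form $G^\dagger(\sigma\otimes I)G$ for the unitary $G=U_2U_0^\dagger$ between the two ($2d'$-dimensional) frames, and $(G\otimes H)(\ket{\Phi^{(\alpha)}}\otimes\ket{\psi'_0})=\ket{\Phi^{(\alpha)}}\otimes\ket{\psi'_2}$ with $H=V_2V_0^\dagger$. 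Using that the reduced auxiliary states are full rank — so that tracing against them loses no information beyond the harmless complex-conjugation freedom of self-testing — one should be able to show that the cross-correlations force $\widehat A_2$, and in particular its eigenprojections in the copy-$0$ frame, into the shape dictated by the infinite-dimensional model~\eqref{eq:psi-Z}, where $U_0A_2U_0^\dagger=(W_0^\dagger W_2)(\sigma_z\otimes I)(W_0^\dagger W_2)^\dagger$ and, since the two dimerizations of $\mathbb Z$ underlying $W_0$ and $W_2$ compose to a bilateral shift, one computes
\begin{equation}
\widehat A_2^{(0)}=\bigl(\ket1\bra1\otimes I_{\cH_{\mathsf A'_0}}\bigr)+\Pi,
\end{equation}
with $\Pi$ a \emph{rank-one} projection orthogonal to $\ket1\bra1\otimes I_{\cH_{\mathsf A'_0}}$ (equivalently $\widehat A_2^{(1)}=\ket0\bra0\otimes I_{\cH_{\mathsf A'_0}}-\Pi$). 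Being a statement about ranks, this is insensitive to the residual freedom left by Theorem~\ref{thm:self-t-CHSH}. Granting it, count ranks two ways: the self-testing of the copy $\{2,3\}$ gives $\rank(\widehat A_2^{(0)})=\dim\cH_{\mathsf A'_2}=d'$, while the displayed form gives $\rank(\widehat A_2^{(0)})=\rank(\ket1\bra1\otimes I_{\cH_{\mathsf A'_0}})+1=d'+1$. This contradiction shows $p\notin\cC_\q^{(4,4,2,2)}$, and since $p\in\cC_{\qs}^{(4,4,2,2)}$ by construction, we conclude $\cC_\q^{(4,4,2,2)}\neq\cC_{\qs}^{(4,4,2,2)}$.

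The main obstacle is precisely the displayed claim about $\widehat A_2^{(0)}$: extracting this rank-level structure from the cross-correlations, given that a correlation only exposes the ``one-sided'' overlaps $\bra\psi A_s^{(a)}\otimes B_t^{(b)}\ket\psi$ rather than operators such as $A_s^{(a)}\otimes I$. Carrying this through requires using \emph{both} self-testing frames simultaneously (through the honest unitaries $G=U_2U_0^\dagger$ and $H=V_2V_0^\dagger$ between equal-dimensional spaces), carefully tracking the complex-conjugation ambiguity and the freedom in the auxiliary states $\ket{\psi'_r}$, and also exploiting correlations that distinguish the two blocks in order to exclude degenerate solutions like $\widetilde A_0=\pm\widetilde A_2$. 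Finite-dimensionality is used in an essential way: in the infinite-dimensional model ``$d'=\infty$'', and two orthogonal infinite-dimensional subspaces need not span, so the identity $\widehat A_2^{(0)}=\ket1\bra1\otimes I+\Pi$ creates no contradiction there — which is exactly why $p$ does belong to $\cC_{\qs}^{(4,4,2,2)}$.
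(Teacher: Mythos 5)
Your skeleton matches the paper's: assume a finite-dimensional model, apply tilted-CHSH self-testing separately to the blocks $\{0,1\}$ and $\{2,3\}$, use Remark~\ref{remark:2} to pin all relevant ranks to $d'=R/2$, and derive a contradiction from the cross-correlations using finite-dimensionality in an essential way. The gap is exactly the step you flag yourself: the displayed claim $\widehat A_2^{(0)}=\ket{1}\bra{1}\otimes I+\Pi$ with $\Pi$ rank one. This cannot serve as the intermediate step — not merely because it is hard, but because it is not derivable as an operator identity inside the hypothetical model. The copy-$\{2,3\}$ self-testing already forces $\rank(\widehat A_2^{(0)})=d'$, and the only access the correlation gives to $\widehat A_2^{(0)}$ is through numbers $\bra\phi \widehat A_2^{(a)}\otimes \widehat B_t^{(b)}\ket\phi$; the available tools (vanishing-variance arguments giving vector identities of the form $X\otimes Y\ket\phi=X'\otimes Y'\ket\phi$, then partial tracing against the full-rank auxiliary state to get support inclusions) can only ever produce supports and ranks consistent with the model's own constraints, never an operator of rank $d'+1$. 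So ``one should be able to show'' hides the entire difficulty, and the route as stated stalls.

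The paper closes this by running the squeeze in the opposite direction and locating the contradiction in a correlation \emph{value} rather than in a rank. It forms the Bob-side observable $M=(Q_2+Q_3)/(2\cos\mu)$, which equals $\sigma_z\otimes I_{\mathsf B'_2}$ in the copy-2 frame, so $\rank(M^{(0)})=\dim\cH_{\mathsf B'_2}=d'$. A cross-correlation identity — whose ideal-model counterpart is precisely your observation that $D^{(0)}=A_0^{(1)}+\ket{0}\bra{0}_{\mathsf Z}$ with the extra rank-one piece invisible against $A_0^{(1)}$ acting on $\ket\Psi$ — yields the vector identity $P_0^{(1)}\otimes M^{(0)}\ket\phi=P_0^{(1)}\otimes I\ket\phi$; partial tracing and the full-rank property of $\tr_{\mathsf A'_0}\big(\ket{\phi'_0}\bra{\phi'_0}\big)$ give $\spn\{\ket 1\}\otimes\cH_{\mathsf B'_0}\subseteq\supp\big(V_0M^{(0)}V_0^\dagger\big)$, and the dimension count forces equality, i.e.\ $V_0M^{(0)}V_0^\dagger=\ket{1}\bra{1}\otimes I_{\mathsf B'_0}$ \emph{exactly}, with no room for your $\Pi$. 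The contradiction is then that $\bra\phi P_0^{(0)}\otimes M^{(0)}\ket\phi$ must vanish by this operator identity, while the same entry computed from the defining infinite-dimensional model equals $\bra\Psi A_0^{(0)}\otimes D^{(0)}\ket\Psi\geq 1/C^2>0$. In short: rather than importing the rank-$(d'+1)$ structure of the ideal model into the finite one, show the finite model is forced onto a rank-$d'$ operator and that this erases a strictly positive correlation entry. If you reorganize your argument around that single entry, the rest of your outline goes through.
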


\begin{proof}
Suppose that $p\in \cC_\q^{(4, 4, 2, 2)}$. Therefore, there are \emph{finite dimensional} Hilbert spaces $\cH_{\mathsf A}, \cH_{\mathsf B}$, a bipartite entangled state $\ket{\phi}\in \cH_{\mathsf A}\otimes \cH_{\mathsf B}$, and observables $P_s= P_s^{(0)} - P_s^{(1)}$, $Q_t = Q_t^{(0)} - Q_t^{(1)}$, $s, t\in \{0,1,2,3\}$ acting on $\cH_{\mathsf A}$ and $\cH_{\mathsf B}$ respectively, such that
$$p(a, b| s, t) = \bra \phi P_s^{(a)}\otimes Q_t^{(b)}\ket \phi,$$
where $p(a, b| s, t)$ is given by~\eqref{eq:p-abc-t}.
As mentioned before, $p_r(a, b| s, t)=p(a, b| s+r, t+r)$ for $s, t\in \{0,1\}$ form two copies of the tilted CHSH correlation for $r\in \{0, 2\}$. Then by Theorem~\ref{thm:self-t-CHSH} there are Hilbert spaces $\cH_{\mathsf A'_r}, \cH_{\mathsf B'_r}$, $r\in \{0,2\}$, and invertible isometries $U_r:\widetilde \cH_{\mathsf A}\to \mathbb C^2\otimes \cH_{\mathsf A'_r}$ and $V_r:\widetilde \cH_{\mathsf B}\to \mathbb C^2\otimes \cH_{\mathsf B'_r}$ such that 
\begin{align}\label{eq:phi-phi'-r}
U_r\otimes V_r\ket \phi = \ket{\Phi^{(\alpha)}}\otimes \ket{\phi'_r} =\frac{1}{\sqrt{1+\alpha^2}} \big(\ket{00}+\alpha\ket{11}\big)\otimes \ket{\phi'_r},
\end{align}
and
\begin{align*}
U_r P_rU_r^\dagger = \sigma_z\otimes I_{\mathsf A'_r}, \qquad\quad & U_rP_{r+1}U_r^\dagger = \sigma_x\otimes I_{\mathsf A'_r},\\
V_r Q_rV_r^\dagger = \sigma_{z}^{(\alpha)}\otimes I_{\mathsf B'_r}, \qquad\quad & V_rQ_{r+1}V_r^\dagger = \sigma_{x}^{(\alpha)}\otimes I_{\mathsf B'_r}.
\end{align*}
Here $\widetilde\cH_{\mathsf A}\subseteq \cH_{\mathsf A}$ and $\widetilde\cH_{\mathsf B}\subseteq \cH_{\mathsf B}$ are supports of $\tr_{\mathsf B} \ket \phi\bra\phi$ and $\tr_{\mathsf A}\ket \phi\bra \phi$ respectively, and $\ket{\phi'_r}\in \cH_{\mathsf A'_r}\otimes \cH_{\mathsf B'_r}$ is some bipartite state. 

As stated in Theorem~\ref{thm:self-t-CHSH}, the above equations show that $P_s$ and $Q_t$, $s, t\in \{0,1, 2, 3\}$, leave the subspaces $\widetilde \cH_{\mathsf A}$ and $\widetilde \cH_{\mathsf B}$ invariant, respectively. Thus by restricting everything to these subspaces, we may assume with no loss of generality that $\tr_{\mathsf A}\ket\phi\bra\phi$ and $\tr_{\mathsf B}\ket\phi\bra\phi$ are invertible and $\widetilde \cH_{\mathsf A}=\cH_{\mathsf A}$ and $\widetilde \cH_{\mathsf B}=\cH_{\mathsf B}$. 

Let us define
\begin{align}\label{eq:op-M}
M=\frac{1}{2 \cos \mu} \big(Q_2 + Q_{3}\big),
\end{align}
where as before $\mu\in (0, \pi/4]$ is given by $\tan(\mu)= \sqrt{\frac{4- \beta^2}{4+\beta^2}}$.
Then by the definitions of $\sigma_{z}^{(\alpha)}, \sigma_{x}^{(\alpha)}$ in~\eqref{eq:sigma-beta} we have
$$V_2 MV_2^\dagger = \sigma_z\otimes I_{\mathsf B'_2}.$$
Thus $M$ is an observable. Moreover, 
$$M^{(b)}=\frac 1 2\Big(I_{\mathsf B}+ (-1)^b\,\frac{Q_2 + Q_{3}}{2\cos \mu} \Big), \qquad \quad b\in \{0,1\},$$
is the projection on the eigenspace of $M$ with eigenvalue $(-1)^b$. We also have
\begin{align}\label{eq:rank-M-r-b}
\rank (M^{(b)}) =\rank (V_2M^{(b)}V_2^\dagger) = \rank(\ket b\bra b\otimes I_{\mathsf B'_2}) = \dim\cH_{\mathsf B'_2} .
\end{align}

Let us compute
\begin{align}
\bra \phi P_0^{(1)}\otimes M^{(0)}\ket\phi & =  \frac 12\bra \phi P_0^{(1)}\otimes I_{\mathsf B}\ket\phi + \frac{1}{4\cos \mu} \Big(\bra \phi P_0^{(1)}\otimes Q_2\ket\psi + \bra \phi P_0^1\otimes Q_3\ket\phi  \Big)\nonumber\\
& = \frac 12p(a=1| s=0) + \frac{1}{4\cos \mu} \sum_{t=2}^3 p(1, 0|0, t) - p(1, 1| 0, t)\nonumber\\
& =  \frac 12 \bra \Psi A_0^{(1)}\otimes I\ket\Psi + \frac{1}{4\cos \mu} \Big(\bra \Psi A_0^{(1)}\otimes B_2\ket\Psi + \bra \Psi A_0^{(1)}\otimes B_3\ket\Psi  \Big)\nonumber\\
& =   \bra \Psi A_0^{(1)}\otimes D^{(0)}\ket\Psi,\label{eq:p0-1-m-2-0-A}
\end{align}
where $\ket \Psi$ is given by~\eqref{eq:psi-Z}, the operator $A_s, B_t$, $s, t\in \{0, 1, 2, 3\}$ are given in Table~\ref{table:1ad}, and
$$D^{(0)} =\frac{1}{2}\Big( I_{\mathsf Z} + \frac{B_2+B_3}{2\cos \mu}\Big).$$
Moreover, $p(a=1| s=0)$ is the marginal distribution of the correlation $p(a=1, b | s=0, t)$ over the first output, that is independent of $t$.
Now by the definitions of $B_2, B_3$ and the isometry $W_2$ we have
$$D^{(0)} =\frac{1}{2}\Big( I + \frac{B_2+B_3}{2\cos \mu}\Big)= W_2 \big(\ket 0\bra 0\otimes I\big) W_2^\dagger = \ket 0\bra 0_{\mathsf Z} +\sum_{j=1}^\infty \ket{(2j-1)}\bra{(2j-1)}_{\mathsf Z}+\ket{-2j}\bra{-2j}_{\mathsf Z}.$$
We also note that
$$A_0^{(1)}= W_0\big(   \ket 1\bra 1\otimes I  \big)W_0^\dagger = \sum_{j=1}^\infty \ket{(2j-1)}\bra{(2j-1)}_{\mathsf Z}+\ket{-2j}\bra{-2j}_{\mathsf Z} =  D^{(0)}-\ket 0\bra 0_{\mathsf Z}. $$
Putting these together, and using the particular form of $\ket\Psi$ we find that
\begin{align*}
\bra \phi P_0^{(1)}\otimes M^{(0)}\ket\phi & =  \bra \Psi A_0^{(1)} \otimes A_0^{(1)}\ket \Psi + \bra \Psi \big(A_0^{(1)}\otimes \ket 0\bra 0_{\mathsf Z}\big)\,\ket \Psi\\
&  = \bra \Psi A_0^{(1)} \otimes A_0^{(1)}\ket \Psi\\
& = \bra \Psi A_0^{(1)} \otimes I_{\mathsf Z} \ket \Psi\\
& = p(a=1| s=0)\\
& =\bra \phi P_0^{(1)}\otimes I\ket\phi. 
\end{align*}
Then using $M^{(1)}= I-M^{(0)}$ we have
$\big\|\bra\phi P_0^{(1)}\otimes M^{(1)}\ket\phi\big\|=\bra \phi P_0^{(1)}\otimes M^{(1)}\ket\phi =0$. That is, $  P_0^{(1)}\otimes M^{(1)}\ket\phi=0$, or equivalently
$$P_0^{(1)}\otimes M^{(0)}\ket\phi = P_0^{(1)}\otimes I\ket\phi.$$
Next using $U_0P_0^{(1)}U_0^\dagger = \ket 1\bra 1\otimes I_{\mathsf A'_0}$ and~\eqref{eq:phi-phi'-r} we have
\begin{align*}
\frac{\alpha}{\sqrt{1+\alpha^2}} \ket{11}\otimes \ket{\phi'_0}&=\big(U_0P_0^{(1)}U_0^\dagger\otimes I\big )\big(U_0\otimes V_0\ket \phi\big)\\
& =  (U_0\otimes V_0) P_0^{(1)}\otimes I\ket\phi \\
& =  (U_0\otimes V_0) P_0^{(1)}\otimes M^{(0)}\ket\phi \\
& = \big(U_0P_0^{(1)}U_0^\dagger\otimes V_0 M^{(0)} V_0^\dagger \big )\big(U_0\otimes V_0\ket \phi\big)\\ & =  \frac{\alpha}{\sqrt{1+\alpha^2}}  \big(I\otimes V_0 M^{(0)} V_0^\dagger \big )  \ket{11}\otimes \ket{\phi'_0}.
\end{align*}
Therefore, taking the partial trace of both sides over the first subsystem we find that
$$\ket 1\bra 1\otimes \tr_{A'_0}\big(\ket {\phi'_0}\bra{\phi'_0}\big) = \big(V_0M^{(0)}V_0^\dagger\big)\Big(\ket 1\bra 1\otimes \tr_{A'_0}\big(\ket{\phi'_0}\bra{\phi'_0}\big)\Big) \big(V_0M^{(0)}V_0^\dagger\big).$$
As a result, based on Remark~\ref{remark:2} we have
\begin{align*}
\spn\{\ket 1\}\otimes \cH_{\mathsf B'_0} = \supp\Big(\ket 1\bra 1\otimes \tr_{A'_0}\big(\ket {\phi'_0}\bra{\phi'_0}\big)\Big) \subseteq \supp\big(V_0M^{(0)}V_0^\dagger\big).
\end{align*}
Next, by comparing the dimensions, using~\eqref{eq:rank-M-r-b} and $\dim \cH_{\mathsf B'_0} = \dim \cH_{\mathsf B'_2} = \frac 12 \dim \cH_{\mathsf B}$, we find that equality holds in the above inclusion. Equivalently, we obtain
$$V_0M^{(0)}V_0^\dagger = \ket 1\bra 1\otimes I_{\mathsf B'_0}.$$
Therefore,
\begin{align}
\bra \phi P_0^{(0)}\otimes M^{(0)} \ket \phi & = \bra \phi (U_0^\dagger \otimes V_0^\dagger) (U_0 P_0^{(0)}U_0^\dagger \otimes V_0M^{(0)}V_0^\dagger) (U_0\otimes V_0)\ket\phi\nonumber\\
& = \bra{\Phi_\alpha} \big(   \ket 0\bra 0 \otimes \ket 1\bra 1   \big) \ket{\Phi^{(\alpha)}}\nonumber\\
& = 0.\label{eq:p00-m2-0=0}
\end{align}
On the other hand, following similar computations as in~\eqref{eq:p0-1-m-2-0-A} we have
\begin{align*}
\bra \phi P_0^{(0)}\otimes M^{(0)} \ket \phi  & =\bra \Psi A_0^{(0)}\otimes D^{(0)}\ket\Psi\\
& \geq  \bra \Psi \big(\ket 0\bra 0_{\mathsf Z}\otimes \ket 0\bra 0_{\mathsf Z}\big)\ket\Psi\\
& = \frac 1{C^2},
\end{align*}
where we used the fact that the supports of both $A_0^{(0)}$ and $D^{(0)}$ contain $\ket 0_{\mathsf Z}\in \cH_{\mathsf Z}$. This is in contradiction with~\eqref{eq:p00-m2-0=0}. We are done.

\end{proof}

In the definition of the nonlocal correlation $p(a, b| s,t)$ in the statement of Theorem~\ref{thm:main-1-1} we could exchange the operators $A_2, A_3$ with operators $B_2, B_3$ in Table~\ref{table:1ad}. In that case we again get two copies of the tilted CHSH correlation (but with roles of the two players being  exchanged in one of them), and can follow similar steps as in the above proof to show that this new correlation does not belong to $\cC_\q^{(4, 4, 2, 2)}$. Indeed, we get a slightly simpler proof since in this case we would not need to introduce a new operator as in~\eqref{eq:op-M}. However, our particular choice of the correlation $p(a, b| s,t)$ would help us to get yet another separation of $\cC_\q$ and $\cC_{\qs}$.

\begin{theorem}
$\cC_{\q}^{(3, 4, 3, 2)} \neq \cC_\qs^{(3, 4, 3, 2)}$.
\end{theorem}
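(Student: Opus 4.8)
The idea is to merge the two tilted-CHSH tests of Theorem~\ref{thm:main-1-1} into a single three-outcome measurement on Alice's side, thereby trading one of Alice's binary settings for a ternary one while keeping Bob with four binary settings. Concretely, take the same shared state $\ket\Psi$ of~\eqref{eq:psi-Z} on $\cH_{\mathsf Z}\otimes\cH_{\mathsf Z}$ and the same operators $B_0,B_1,B_2,B_3$ of Table~\ref{table:1ad}; on Alice's side keep the two binary observables $A_0=W_0(\sigma_z\ot I)W_0^\dagger$ and $A_1=W_0(\sigma_x\ot I)W_0^\dagger$ as settings $s=0,1$, but replace the pair $A_2,A_3$ by a single \emph{ternary} measurement for setting $s=2$. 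The natural choice is the measurement $\{R^{(0)},R^{(1)},R^{(2)}\}$ whose outcome records, roughly, ``which $W_2$-pair-block am I in and what is the within-block value,'' but collapsed to three outcomes: one outcome should be the rank-one projector $\ket 0\bra 0_{\mathsf Z}$ (the special basis vector singled out in the $W_2$-pairing), and the other two outcomes should be the $\sigma_z$-eigenprojectors of the $W_2$-picture restricted to the complement of $\ket0_{\mathsf Z}$. Equivalently $R^{(2)}=\ket0\bra0_{\mathsf Z}$, $R^{(0)}=W_2(\ket0\bra0\ot I)W_2^\dagger-\ket0\bra0_{\mathsf Z}$, $R^{(1)}=W_2(\ket1\bra1\ot I)W_2^\dagger$. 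One checks that $R^{(0)}+R^{(1)}$ equals the projector $A_0^{(1)}$ from the proof above, and that on the orthogonal complement of $\ket0_{\mathsf Z}$ the pair $(R^{(0)},R^{(1)})$ together with $(B_2,B_3)$ reproduces a tilted-CHSH correlation — this is exactly the content of the block decomposition established in the proof of Theorem~\ref{thm:main-1-1}. Call the resulting correlation $q(a,b\mid s,t)$ with $s\in\{0,1,2\}$, $a\in\{0,1,2\}$, $t\in\{0,1,2,3\}$, $b\in\{0,1\}$; by construction $q\in\cC_\qs^{(3,4,3,2)}$.

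**Key steps, in order.** First, verify that $q$ genuinely lies in $\cC_\qs^{(3,4,3,2)}$ and that its marginal on setting $s=2$, outcomes $\{0,1\}$ merged (i.e.\ $q(a{\in}\{0,1\},b\mid 2,t)$), together with $B_2,B_3$, equals the tilted-CHSH correlation $p_2$ from the previous proof. Second, assume $q\in\cC_\q^{(3,4,3,2)}$, so there are finite-dimensional $\cH_{\mathsf A},\cH_{\mathsf B}$, a state $\ket\phi$, binary observables $P_0,P_1$ and a ternary measurement $\{\tilde P_2^{(0)},\tilde P_2^{(1)},\tilde P_2^{(2)}\}$ for Alice, and binary $Q_0,Q_1,Q_2,Q_3$ for Bob, reproducing $q$. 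Apply Theorem~\ref{thm:self-t-CHSH} to the CHSH test $(P_0,P_1;Q_0,Q_1)$ to get $U_0,V_0$, and apply it again to the CHSH test formed by $(\tilde P_2^{(0)}-\tilde P_2^{(1)}\text{ restricted to }\mathrm{supp}(\tilde P_2^{(0)}+\tilde P_2^{(1)}),\,P_1\text{ restricted there};\,Q_2,Q_3)$ to get $U_2,V_2$; here the role played by $A_0^{(1)}$ in the proof of Theorem~\ref{thm:main-1-1} is now played by the observed projector $\tilde P_2^{(0)}+\tilde P_2^{(1)}$. Third, reproduce the computation around~\eqref{eq:p0-1-m-2-0-A}: define $M=\tfrac1{2\cos\mu}(Q_2+Q_3)$ as before, evaluate $\bra\phi P_0^{(1)}\ot M^{(0)}\ket\phi$ using the known correlation values $q$, and conclude $P_0^{(1)}\ot M^{(1)}\ket\phi=0$, hence $P_0^{(1)}\ot M^{(0)}\ket\phi=P_0^{(1)}\ot I\ket\phi$. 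Fourth, push this through $U_0\ot V_0$ and take the partial trace as before to deduce $V_0M^{(0)}V_0^\dagger=\ket1\bra1\ot I_{\mathsf B'_0}$, again using the dimension count $\dim\cH_{\mathsf B'_0}=\dim\cH_{\mathsf B'_2}=\tfrac12\dim\cH_{\mathsf B}$ and $\rank M^{(b)}=\dim\cH_{\mathsf B'_2}$. Fifth, derive the contradiction: on one hand $\bra\phi P_0^{(0)}\ot M^{(0)}\ket\phi=\bra{\Phi^{(\alpha)}}(\ket0\bra0\ot\ket1\bra1)\ket{\Phi^{(\alpha)}}=0$; on the other hand this quantity equals $\bra\Psi A_0^{(0)}\ot D^{(0)}\ket\Psi\ge \bra\Psi(\ket0\bra0_{\mathsf Z}\ot\ket0\bra0_{\mathsf Z})\ket\Psi=1/C^2>0$, since both $A_0^{(0)}$ and $D^{(0)}$ have $\ket0_{\mathsf Z}$ in their support — this step is verbatim the end of the previous proof.

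**The main obstacle.** The dimension bookkeeping is the delicate point. In Theorem~\ref{thm:main-1-1} the equality $\dim\cH_{\mathsf B'_r}=\tfrac12\dim\cH_{\mathsf B}$ came from both $W_0$- and $W_2$-pairings being genuine qubit splittings of the \emph{same} space. Here the $W_2$-side now carries the extra outcome $R^{(2)}=\ket0\bra0_{\mathsf Z}$, a single basis vector, so in the honest realization $\cH_{\mathsf A}$ splits as $\mathbb C\oplus(\mathbb C^2\ot\cH_{\mathsf A'_2})$ rather than $\mathbb C^2\ot\cH_{\mathsf A'_2}$, and likewise we must be careful whether Bob's $\cH_{\mathsf B}$ decomposes the same way on both sides. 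The right move is to first restrict to the supports $\widetilde\cH_{\mathsf A},\widetilde\cH_{\mathsf B}$ of the reduced states of $\ket\phi$ (so the reduced states become full rank, as in the previous proof), and then to observe that the \emph{Bob-side} analysis is unaffected: $M$ still acts on $\cH_{\mathsf B}$ and the two self-tests still give two qubit splittings of $\widetilde\cH_{\mathsf B}$, forcing $\dim\cH_{\mathsf B'_0}=\dim\cH_{\mathsf B'_2}=\tfrac12\dim\widetilde\cH_{\mathsf B}$ and $\rank M^{(b)}=\dim\cH_{\mathsf B'_2}$ exactly as before — the asymmetry on Alice's side never enters the Bob-side rank comparison. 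One must check that the correlation $q$ has been chosen so that the marginal $q(a{=}0,b\mid 2,t)$ still equals $\bra\Psi(\ket0\bra0_{\mathsf Z})\ot B_t^{(b)}\ket\Psi$, which is what makes the final contradiction go through; this is guaranteed by the explicit form $R^{(0)}=W_2(\ket0\bra0\ot I)W_2^\dagger-\ket0\bra0_{\mathsf Z}$ together with the structure of $\ket\Psi$.
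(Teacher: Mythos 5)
Your key idea---folding two of Alice's binary observables into a single ternary measurement whose outcomes are split by the rank-one projector $\ket 0\bra 0_{\mathsf Z}$---is exactly the right one, and your measurement $R$ coincides, up to relabelling outcomes, with the ternary measurement the paper actually uses. The fatal problem is your choice of which two binary settings to retain. You keep $A_0$ and $A_1$ (both from the $W_0$-pairing) and discard both $A_2$ and $A_3$. But $R$ already determines $A_0$: one has $R^{(1)}+R^{(2)} = A_0^{(0)}$ and $R^{(0)} = A_0^{(1)}$, alongside $R^{(0)}+R^{(2)} = A_2^{(0)}$ and $R^{(1)} = A_2^{(1)}$. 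So your setting $s=0$ is redundant, while the observable that is genuinely needed, $A_3 = W_2(\sigma_x\ot I)W_2^\dagger$, is absent from your measurement set. Consequently your correlation $q$ contains only one copy of the tilted CHSH correlation, not two.

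This is not a cosmetic issue. Your second step proposes to run the second self-test on the data $(\widetilde P_2^{(0)}-\widetilde P_2^{(1)}$ restricted, $P_1;\, Q_2, Q_3)$, i.e., with $A_1$ playing the role of $A_3$. But $\bra\Psi A_1\ot B_t\ket\Psi = 0$ for $t\in\{2,3\}$: the operator $A_1$ is purely off-diagonal with respect to the $W_0$-pairing $\{2j,2j+1\}$, whereas the off-diagonal parts of $B_2, B_3$ live on the $W_2$-pairing $\{2j-1,2j\}$, and these two pairings of $\mathbb Z$ share no common pair. Hence the tilted-CHSH expression built from $(A_2, A_1; B_2, B_3)$ falls strictly below $\sqrt{8+2\beta^2}$ (it loses the positive contribution $\langle A_3\ot B_2\rangle - \langle A_3\ot B_3\rangle = 2\sin\mu\sin 2\theta$), so Theorem~\ref{thm:self-t-CHSH} cannot be invoked for the second test. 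Without it you lose exactly what your steps four and five rely on: that $M=(Q_2+Q_3)/(2\cos\mu)$ is an observable, that $V_2MV_2^\dagger = \sigma_z\ot I_{\mathsf B'_2}$, and that $\rank M^{(b)} = \frac 12\dim\cH_{\mathsf B}$. The repair is to keep the two $\sigma_x$-type observables $A_1$ and $A_3$ as the binary settings and let the single ternary setting encode both $\sigma_z$-type observables $A_0$ and $A_2$ simultaneously (which it does, by the identities above); then every entry of the $(4,4,2,2)$ correlation $p$ of Theorem~\ref{thm:main-1-1} is a sum of entries of $q$, and the separation follows by a direct reduction to Theorem~\ref{thm:main-1-1} rather than by re-running the rank argument.
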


\begin{proof}
We first describe a correlation $q(a, b|s, t)$ in $\cC_\qs^{(3, 4, 3, 2)}$.  Let the local Hilbert spaces be $\cH_{\mathsf Z}$ as before and $\ket\Psi\in \cH_{\mathsf Z}\otimes \cH_{\mathsf Z}$ be as in~\eqref{eq:psi-Z}.  Next we need to introduce three projective measurements $\big\{\widetilde A_s^{(a)}:~ a\in \{0,1,2\}\big\}$, $s\in \{\ast,1, 3\}$,
 and four (binary) observables $B_t$, $t\in \{0,1,2,3\}$. 
Here for convenience the inputs of the first player are indexed by $s$ in $\{\ast, 1, 3\}$ instead of $\{0,1,2\}$. 
 We let $B_t$, $t\in \{0,1,2,3\}$ be as before given by Table~\ref{table:1ad}. We also let $\widetilde A_s$ for $s\in \{1, 3\}$ be binary observables with values in $\{0,1\}$, \emph{i.e.}, we assume that $\widetilde A_1^{(2)}=\widetilde A_3^{(2)} =0$, and let $\widetilde A_1^{(a)}=A_1^{(a)}, \widetilde A_3^{(a)}=A_3^{(a)}$ for $a\in \{0,1\}$ be given according to the binary observables $A_1, A_3$  in Table~\ref{table:1ad}.  Finally, we define $\big\{\widetilde A_\ast^{(a)}:~ a\in \{0,1,2\}\big\}$ by
\begin{align*}
\widetilde A_\ast^{(0)} & = W_0\big(\ket 0\bra 0 \otimes I\big)W_0^\dagger - \ket 0\bra 0_{\mathsf Z}, \\
\widetilde A_\ast^{(1)} & = W_0\big(\ket 1\bra 1 \otimes I\big)W_0^\dagger,\\
\widetilde A_\ast^{(2)} & = \ket 0\bra 0_{\mathsf Z}.
\end{align*}
These give the correlation
$$q(a, b| s, t) = \bra{\Psi} \widetilde A_s^{(a)} \otimes B_t^{(b)}\ket{\Psi},$$ 
which belongs to $\cC_{\qs}^{(3, 4, 3, 2)}$. 

We claim that $q$ does not belong to $\cC_{\q}^{(3, 4, 3, 2)}$. The main point behind the proof is that the binary observables $A_0, A_2$ given in Table~\ref{table:1ad} can be written in terms of $\widetilde A_\ast^{(a)}$, $a\in \{0,1,2\}$. In fact, using the definitions of the isometries $W_0, W_2$ we have
\begin{align}
A_0^{(0)} &= \widetilde A_\ast^{(0)} + \widetilde A_\ast^{(2)}, \qquad\qquad \quad   A_0^{(1)} = \widetilde A_\ast^{(1)},\label{eq:A-tilde-0} \\
A_2^{(0)} &= \widetilde A_\ast^{(1)} + \widetilde A_\ast^{(2)}, \qquad \qquad\quad   A_2^{(1)} = \widetilde A_\ast^{(0)}.\label{eq:A-tilde-1}
\end{align}
This means that the correlation $p\in \cC_{\qs}^{(4,4, 2, 2)}$ of the previous theorem can be written in terms of $q$. Indeed, by the definitions, 
$$p(a, b| s, t) = q(a, b| s, t), \qquad \quad s\in \{1, 3\}, t\in \{1, 2, 3, 4\}, a, b\in\{0,1\},$$
and if $s\in \{0, 2\}$, we have, for instance
\begin{align*}
p(0, b| 0, t) &= \bra {\Psi} A_0^{(0)} \otimes B_t^{(b)} \ket{\Psi}\\ 
& = \bra {\Psi} \widetilde A_\ast^{(0)} \otimes B_t^{(b)} \ket{\Psi}+\bra {\Psi} \widetilde A_\ast^{(2)} \otimes B_t^{(b)} \ket{\Psi}\\
& = q(0, b| \ast, t) + q(2, b| \ast, t).
\end{align*}
Therefore, if $q\in \cC_\q^{(3, 4, 3, 2)}$, $p$ belongs to $\cC_\q^{(4, 4, 2, 2)}$, contradicting Theorem~\ref{thm:main-1-1}. 

To make the above argument more precise, suppose that there are \emph{finite dimensional} Hilbert spaces $\cH_{\mathsf A}, \cH_{\mathsf B}$, $\ket\phi\in \cH_{\mathsf A}\otimes \cH_{\mathsf B}$, local measurements $\big\{\widetilde P_s^{(a)}:~a\in \{0,1,2\}\big \}$, $s\in \{\ast, 1, 3\}$, and binary observables $Q_t$, $t\in \{0,1,2,3\}$ such that 
$$q(a, b| s, t) = \bra \phi \widetilde P_s^{(a)}\otimes Q_t^{(b)}\ket \phi.$$
Since $\widetilde A_s^{(2)}=0$ for $s\in \{1, 3\}$, we have
$$\|P_1^{(2)}\otimes I_{\mathsf B}\ket\phi\|^2=\bra \phi \widetilde P_1^{(2)}\otimes I_{\mathsf B}\ket\phi=q(a=2| s) = \bra{\Psi} \widetilde A_s^{(2)} \otimes I\ket{\Psi}=0,$$  
and $P_1^{(2)}\otimes I_{\mathsf B}\ket\phi=0$. Thus we may think of  $\big\{\widetilde P_s^{(a)}:~a\in \{0,1,2\}\big \}$ for $s\in \{1, 3\}$ as binary measurements with values in $\{0,1\}$. More precisely,  for $s\in \{1, 3\}$, we may define the binary measurement $\big\{P_s^{(0)}, P_s^{(1)}\big\}$ by $P_s^{(0)} = \widetilde P_s^{(0)}+ \widetilde P_s^{(2)}$ and $P_s^{(1)}= \widetilde P_s^{(1)} $. Also we may define binary measurements $P_0, P_2$ based on~\eqref{eq:A-tilde-0} and~\eqref{eq:A-tilde-1}:
\begin{align*}
P_0^{(0)} &= \widetilde P_\ast^{(0)} + \widetilde P_\ast^{(2)}, \qquad\qquad \quad   P_0^{(1)} = \widetilde P_\ast^{(1)}, \\
P_2^{(0)} &= \widetilde P_\ast^{(1)} + \widetilde P_\ast^{(2)}, \qquad \qquad\quad   P_2^{(1)} = \widetilde P_\ast^{(0)}.
\end{align*}
Then it is not hard to verify that the correlation generated by the shared state $\ket{\phi}$ and binary observables $P_s, Q_t$, $s, t\in\{0,1,2,3\}$ in finite dimensions equals $p$. However, in Theorem~\ref{thm:main-1-1} we showed that $p$ does not belong to $\cC_\q^{(4, 4, 2, 2)}$. This is a contradiction.

\end{proof}

\section{Non-closure of the set of quantum  correlations}

In this section we prove our second main result (Theorem~\ref{thm:C-qs-qa-0}) that is the separation of $\cC_{\qs}^{(4, 4, 3, 3)}$ and $\cC_{\qa}^{(4, 4, 3, 3)}$. Recall that $\cC_{\qa}^{(4, 4, 3, 3)}$ is the closure of $\cC_{\qs}^{(4, 4, 3, 3)}$ as well as $\cC_{\q}^{(4, 4, 3, 3)}$. To prove this separation, we introduce a sequence of correlations $\big\{p_n(a, b| s, t):~ n\geq 1\big\}$ in  $\cC_\q^{(4, 4, 3, 3)}$ that converge to a correlation $p_\ast(a, b| s,t)$. By definition $p_\ast(a, b| s, t)$ belongs to  $\cC_{\qa}^{(4, 4, 3, 3)}$. Then we show that $p_\ast\notin \cC_\qs^{(4, 4, 3, 3)}$. This gives our main result.

In constructing $p_\ast(a, b | s, t)$ and proving that it does not belong to $\cC_{\qs}$ we follow similar steps as in~\cite{Col19}. The only difference is that instead of using the protocol of~\cite{Coladangelo18} to self-test the maximally entangled state of Schmidt rank $3$, we use Theorem~\ref{thm:satwap}. This would enable us to reduce the number of inputs in the target nonlocal correlation. 

Following~\cite{Col19} our construction of $p_\ast(a, b| s, t)$ is based on \emph{entanglement embezzlement}~\cite{vanDamHayden03, Leung+13}. Let $\ket{\psi}, \ket\phi$ be two bipartite states that are not equivalent up to local isometries, \emph{i.e.}, they have different \emph{multisets} of Schmidt coefficients. Then for any other bipartite state $\ket \chi$, the states $\ket \psi\otimes \ket \chi$ and $\ket\phi\otimes \ket\chi$ are still inequivalent up to local isometries. Nevertheless, by choosing an appropriate state $\ket\chi$ with \emph{increasing} Schmidt rank, $\ket \psi\otimes \ket \chi$ and $\ket\phi\otimes \ket\chi$ become \emph{approximately} equivalent up to local isometries with an arbitrarily small error.  This is called entanglement embezzlement, and besides self-testing this is the main ingredient of our separation theorem in this section. 

Let 
$$\ket{\Phi_3} = \frac 1{\sqrt 3}\big( \ket{00}+\ket{11} +\ket{22} \big),$$
be the maximally entangled state in $\mathbb C^3\otimes \mathbb C^3$. Also let 
$$\ket{\tau} = \frac{1}{\sqrt 2}\big(\ket{00} +\ket{22}\big) \in \mathbb C^3\otimes \mathbb C^3.$$
Then let  $\cH_{\sfE^n} = \cH_{\sfF^n} = \big(\mathbb C^3\big)^{\otimes n}$ and define $\ket{\chi_n}\in \cH_{\sfE^n}\otimes \cH_{\sfF^n}$ by
$$\ket{\chi_n}_{\sfE^n\sfF^n} = \frac{1}{\sqrt{C_n}} \sum_{j=1}^n \ket{00}^{\otimes j}\otimes \ket{\tau}^{\otimes (n-j)},$$
where $C_n\geq n$ is a normalization factor. 
Now let
\begin{align}\label{eq:psi_n}
\ket{\psi_n}= \ket{\Phi_3}\otimes \ket{\chi_n}.
\end{align}
This entangled state will be used as the shared state to define a nonlocal correlation $p_n(a, b|s, t)$ in $\cC_{\q}^{(4, 4, 3, 3)}$. Since $\ket{\Phi_3}$ is a part of this shared state, the two players can generate the SATWAP correlation by measuring this part. More precisely, following part (i) of Theorem~\ref{thm:satwap} we let
$A_0, A_1$ and $B_0, B_1$ be $3$-valued observables given by the first two rows of Table~\ref{table:2}.

\begin{table}[t]
\renewcommand{\arraystretch}{2.1}
\begin{center}
  \begin{tabular}{ | c || c  |} 
     \hline 
    $\qquad A_0 =  \omega^{-1/4} Z\Big(I-(1-\mathrm i)\ket J\bra J \Big)\otimes I_{\sfE^n}       \quad $ & $\quad B_0= Z\otimes I_{\sfF^n}\qquad$ \\ [.07in] \hline 
   $\qquad A_1 =\omega^{1/4}Z\Big(I-(1+\mathrm i)\ket J\bra J \Big)\otimes I_{\sfE^n} \quad $ & $\quad B_1= \omega^{1/2} \Big(I-2\ket J\bra J\Big)Z\otimes I_{\sfF^n}\quad$ \\ [.07in] \hline\hline
     $\quad A_2^{(0)} - A_2^{(1)}= \Gamma_n^{\dagger}  \big (\widetilde\sigma_z\otimes I_{\sfE^n}\big) \Gamma_n\qquad$ & $\quad B_2^{(0)}- B_2^{(1)}= \Lambda_n^{\dagger}   \big(\widetilde\sigma_{z}^{(\alpha)}\otimes I_{\sfF^n}\big) \Lambda_n\qquad$  \\ [.07in]\hline 
    $\quad A_3^{(0)} - A_3^{(1)}= \Gamma_n^\dagger   \big(\widetilde\sigma_x\otimes I_{\sfE^n}\big) \Gamma_n\qquad$      &
     $\quad B_3^{(0)} - B_3^{(1)}= \Lambda_n^{\dagger}   \big(\widetilde\sigma_{x}^{(\alpha)}\otimes I_{\sfF^n}\big) \Lambda_n\qquad$ \\ [.07in]
    \hline
  \end{tabular}
 \caption{\small $A_s, B_t$, $s, t\in \{0, 1, 2, 3\}$ are $3$-valued observables acting on the Hilbert spaces $\mathbb C^3\otimes\cH_{\sfE^n}$ and $\mathbb C^3\otimes \cH_{\sfF^n}$ respectively. Here $\omega= e^{2\pi \mathrm i/3}$, $\ket J= \frac{1}{\sqrt 3}\big(\ket 0+\ket 1+\ket 2\big)$ and $Z= \sum_{j=0}^2 \omega^j\ket j\bra j$.
 Moreover, $\sigma_x, \sigma_z$ are the Pauli matrices, and $\sigma_{x}^{(\alpha)}, \sigma_{z}^{(\alpha)}$ for $\alpha= \frac 1{\sqrt 2}$ are defined in~\eqref{eq:sigma-beta}. By $\widetilde \sigma_z:\mathbb C^3\to \mathbb C^3$ we mean the operator that acts on $\spn\{\ket 0, \ket 1\}$ as $\sigma_z$ and $\widetilde \sigma_z\ket 2=0$. Indeed, we have $\widetilde \sigma_z = \ket0\bra 0-\ket 1\bra 1$. $\widetilde \sigma_x, \widetilde \sigma_z^{(\alpha)}, \widetilde \sigma_x^{(\alpha)}$ are defined similarly. 
 We also let $A_s^{(2)}= \Gamma_n^\dagger\big(\ket 2\bra 2\otimes I_{\sfE^n}  \big)\Gamma_n $ and $ B_t^{(2)}= \Lambda_n^\dagger\big(\ket 2\bra 2\otimes I_{\sfF^n}  \big)\Lambda_n $ for $s, t\in \{2, 3\}$.}
 \label{table:2}
\end{center}
  \end{table}

To define $A_s, B_t$ for $s, t\in \{2, 3\}$ we need to introduce some notation. 
Let the unitary $\Gamma_n: \mathbb C^3\otimes \cH_{\sfE^n}\to  \mathbb C^3\otimes \cH_{\sfE^n}$ 
be given by its action on computational basis vectors as
$$\Gamma_n\ket{e_0}\ket{e_1\dots e_n} = \begin{cases}
\ket{e_1}\ket{e_2\dots e_n e_0} \qquad &\forall i,\,~e_i\in \{0,2\}, \\
\ket{e_0}\ket{e_1\dots e_n} \qquad &\text{ otherwise}.
\end{cases}$$
 We define $\Lambda_n: \mathbb C^3\otimes \cH_{\sfF^n}\to  \mathbb C^3\otimes \cH_{\sfF^n}$ similarly. Indeed, the action of $\Gamma_n$ and $\Lambda_n$ on the subspace spanned by $\{\ket 0, \ket 2\}^{\otimes (n+1)}$ is a cyclic shift, while they behave as identity on the orthogonal subspace.
Since $\ket{\tau}\otimes \ket{\chi_n}_{\sfE^n\sfF^n}$ belongs to the tensor product of the former subspace with itself, we have
\begin{align*}
  \Gamma_{n}\otimes \Lambda_{n} \,\ket{\tau}\otimes \ket{\chi_n}_{\sfE^n\sfF^n} &= \frac{1}{\sqrt{C_n}}   \Gamma_{n}\otimes \Lambda_{n} \Big(\sum_{j=1}^n \ket \tau\otimes \ket{00}^{\otimes j} \otimes \ket{\tau}^{\otimes (n-j)}\Big)\\
&  =    \frac{1}{\sqrt{C_n}}   \sum_{j=1}^n \ket{00}^{\otimes j} \otimes \ket{\tau}^{\otimes (n-j+1)}\\
& = \frac{1}{\sqrt{C_n}} \ket{00}\otimes \Big( \sum_{j=1}^n \ket{00}^{\otimes (j-1)} \otimes \ket{\tau}^{\otimes (n-j+1)}\Big)\\
& = \ket{00}\otimes \ket{\chi_n}+ \ket{00}\otimes \ket{\epsilon_n},
\end{align*}
where $\ket{\epsilon_n} = \frac{1}{\sqrt{C_n}} \big(\ket{00}^{\otimes n} - \ket{\tau}^{\otimes n}\big)$. We note that $C_n\geq n$, so $\|\ket{\epsilon_n}\| \leq  \sqrt{\frac{2}{n}}$. This means that $\{\ket {\chi_n}:~ n\geq 1\}$ is an \emph{embezzlement family} for local transformation of $\ket\tau$ to $\ket{00}$.

Now observe that 
$$\ket{\Phi_3} = \frac{1}{\sqrt 3} \big(\ket{00}+\ket{11} +\ket{22}\big) =  \frac{1}{\sqrt 3} \big(\sqrt 2\,\ket{\tau}+\ket{11} \big).$$
Therefore, 
\begin{align}
  \Gamma_{n}\otimes \Lambda_{n} \,\ket{\psi_n} &=   \Gamma_{n}\otimes \Lambda_{n} \,  \ket{\Phi_3} \otimes \ket{\chi_n}\nonumber\\
& = \frac{1}{\sqrt 3}\big(\sqrt 2\ket{00} +\ket{11}\big)\otimes \ket{\chi_n} + \sqrt{\frac 2 3} \ket{00}\otimes \ket{\epsilon_n}\nonumber\\
& = \frac{1}{\sqrt{1+\alpha^2}}\big(\ket{00} +\alpha\ket{11} \big)\otimes \ket{\chi_n}+ \sqrt{\frac 2 3} \ket{00}\otimes \ket{\epsilon_n},\label{eq:phi-tau-emb}
\end{align}
where $\alpha= \frac{1}{\sqrt 2}$.
Thus $\Gamma_{n}\otimes \Lambda_{n} \,\ket{\psi_n}$ approximately contains $\ket{\Phi^{(\alpha)}}=\frac{1}{\sqrt{1+\alpha^2}}(\ket{00} +\alpha\ket{11})$ in its first registers. Therefore, the two players may generate the tilted CHSH correlation by locally measuring it. 
To this end, we define the measurements $\big\{A_s^{(0)}, A_s^{(1)}, A_s^{(2)}\big\}$ and $\big\{B_t^{(0)}, B_t^{(1)}, B_t^{(2)}\big\}$ for $s, t\in \{2, 3\}$ according to the last two rows of Table~\ref{table:2}. We should explain that the observables in the tilted CHSH correlation are binary, yet here $A_s, B_t$ are $3$-valued observables. Nevertheless, as is clear from~\eqref{eq:phi-tau-emb}, the state $\Gamma_{n}\otimes \Lambda_{n} \ket{\psi_n} $ is locally orthogonal to both $\spn\{\ket 2\}\otimes \cH_{\sfE^n}$ and $\spn\{\ket 2\}\otimes \cH_{\sfF^n}$. Thus we may implement, \emph{e.g.}, Pauli measurements on the first registers of $\Gamma_{n}\otimes \Lambda_{n} \ket{\psi_n}$. This gives us $A_s^{(a)}, B_t^{(b)}$, for $s, t\in \{2, 3\}$ and $a, b\in \{0,1\}$ as in Table~\ref{table:2}. We also let 
$$A_s^{(2)}= \Gamma_n^\dagger\big(\ket 2\bra 2\otimes I_{\sfE^n}  \big)\Gamma_n, \qquad\quad B_t^{(2)}= \Lambda_n^\dagger\big(\ket 2\bra 2\otimes I_{\sfF^n}  \big)\Lambda_n ,\quad \quad s, t\in \{2, 3\}.$$
Then we have $\sum_a A_s^{(a)}=I\otimes I_{\sfE^n}$ and $\sum_b B_t^{(b)}=I\otimes I_{\sfF^n}$ for $s, t\in \{2, 3\}$, ensuring that $A_s, B_t$ are valid 3-valued observables. Nevertheless, we emphasize once again that since  $\Gamma_{n}\otimes \Lambda_{n} \ket{\psi_n}$ is locally orthogonal to both projections $\ket 2\bra 2\otimes I_{\sfE^n}$ and $\ket 2\bra 2\otimes I_{\sfF^n}$, the outcomes of these 3-valued measurements on $\ket{\psi_n}$ is never $a=2$ or $b=2$.

The shared state $\ket{\psi_n}$ given in~\eqref{eq:psi_n} and the measurements given in Table~\ref{table:2} give us a correlation $p_n(a, b| s, t)\in \cC_\q^{(4, 4, 3, 3)}$. 
Since $\cC_\q^{(4, 4, 3, 3)}$ is bounded, the sequence $\{p_n(a, b| s, t): n\geq 1\}$ has a limit $p_\ast(a, b| s, t)$. By definition $p_\ast$ belongs to $\cC_{\qa}^{(4, 4, 3, 3)}$.  In the following we identify some crucial properties of this limiting correlation. 

By construction $p_n(a, b| s, t)$ for $s, t\in \{0,1\}$ resembles the SATWAP correlation, so does $p_*(a, b|s, t)$. Also, by~\eqref{eq:phi-tau-emb}, the limiting correlation $p_*(a, b| s, t)$  for $s, t\in \{2, 3\}$ is the tilted CHSH correlation for $\alpha=1/\sqrt{2}$.  We also have
\begin{align*}
p_n(1, 1| 2, 0) &= \bra {\psi_n} A_2^{(1)} \otimes B_0^{(1)}\ket{\psi_n}\\
& = \bra{\psi_n} \Gamma_n^{\dagger} \big(\ket 1\bra 1\otimes I_{\sfE^n}\big)\Gamma_n\otimes \big(\ket 1\bra 1\otimes I_{\sfF^n}\big) \ket{\psi_n}\\
& = \bra{\psi_n}  \big(\ket 1\bra 1\otimes I_{\sfE^n}\big)\otimes \big(\ket 1\bra 1\otimes I_{\sfF^n}\big) \ket{\psi_n}\\
& = \frac{1}{3},
\end{align*}
where in the third line we use the fact that $\Gamma_n$ acts as identity on the subspace $\spn\{\ket 1\}\otimes \cH_{\sfE^n}$. We similarly have $p_n(a=1| s=2) =\bra{\psi_n} A_2^{(1)}\otimes (I\otimes I_{\sfF^n})\ket{\psi_n}=1/3$ and $p_n(b=1| t=0)=\bra{\psi_n} (I\otimes I_{\sfE^n})\otimes B_0^{(1)}\ket{\psi_n} = 1/3 $. Therefore,
$$p_\ast(1, 1| 2, 0)=p_\ast(a=1| s=2) = p_\ast(b=1| t=0) = 1/3.$$
To summarize, $p_*(a, b| s, t) \in \cC_{\qa}^{(4, 4, 3, 3)}$ has the following properties:
\begin{itemize}
\item[{\rm{(i)}}] $p_\ast(a, b| s, t)$ for $s, t\in \{0,1\}$ is the SATWAP correlation for $d=3$.
\item[{\rm{(ii)}}] $p_*(a, b| s, t)$  for $s, t\in \{2, 3\}$ is the tilted CHSH correlation for $\alpha=1/\sqrt{2}$
\item[{\rm{(iii)}}] $p_\ast(1, 1| 2, 0)=p_\ast(a=1| s=2) = p_\ast(b=1| t=0) = 1/3$.
\end{itemize}

\begin{theorem}\label{thm:main-2}
There is no $q(a, b| s, t) $ in $\cC_{\qs}^{(4, 4, 3, 3)}$ satisfying the above properties {\rm (i)}, {\rm (ii)} and  {\rm(iii)}. As a result, $\cC_{\qs}^{(4, 4, 3, 3)}\neq \cC_{\qa}^{(4, 4, 3, 3)}$.
\end{theorem}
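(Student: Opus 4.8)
The plan is to suppose, for contradiction, that some $q\in\cC_{\qs}^{(4,4,3,3)}$ satisfies (i), (ii), (iii), fix a spatial representation of it with state $\ket\psi\in\cH_{\mathsf A}\otimes\cH_{\mathsf B}$ and $3$-valued observables $A_0,A_1,A_2,A_3$ and $B_0,B_1,B_2,B_3$, and then play the two self-testing theorems against the incompatibility of the Schmidt spectra of $\ket{\Phi_3}$ and $\ket{\Phi^{(\alpha)}}$ for $\alpha=1/\sqrt2$. As a first step I would reduce to the supports: by (ii) the outcome $2$ never occurs for inputs in $\{2,3\}$, so $A_2^{(2)}\otimes I\ket\psi=A_3^{(2)}\otimes I\ket\psi=0$ and likewise on Bob's side; moreover, applying Theorem~\ref{thm:self-t-CHSH} to the $\{2,3\}$ block and Theorem~\ref{thm:satwap}(ii) to the $\{0,1\}$ block, the subspaces $\widetilde\cH_{\mathsf A}=\supp(\tr_{\mathsf B}\ket\psi\bra\psi)$ and $\widetilde\cH_{\mathsf B}=\supp(\tr_{\mathsf A}\ket\psi\bra\psi)$ are invariant under all the $A_s$ and all the $B_t$. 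Restricting everything to these subspaces leaves the correlation $q$ unchanged, makes the reduced states full rank, and makes $A_2,A_3$ (resp.\ $B_2,B_3$) genuine binary $\pm1$ observables (since $A_2^{(0)}+A_2^{(1)}=I$ there, etc.).

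Next I would invoke both self-testing theorems simultaneously on the same state $\ket\psi$. By (i) and Theorem~\ref{thm:satwap}(ii) there are auxiliary spaces and invertible isometries $U_1:\widetilde\cH_{\mathsf A}\to\mathbb C^3\otimes\cH_{\mathsf A'_1}$ and $V_1:\widetilde\cH_{\mathsf B}\to\mathbb C^3\otimes\cH_{\mathsf B'_1}$ and a state $\ket{\psi'_1}$ with $U_1\otimes V_1\ket\psi=\ket{\Phi_3}\otimes\ket{\psi'_1}$ and, in particular, $V_1B_0^{(1)}V_1^\dagger=\ket1\bra1\otimes I_{\mathsf B'_1}$. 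By (ii) and Theorem~\ref{thm:self-t-CHSH} (with the tilt $\beta$ corresponding to $\alpha=1/\sqrt2$) there are invertible isometries $U_2:\widetilde\cH_{\mathsf A}\to\mathbb C^2\otimes\cH_{\mathsf A'_2}$ and $V_2:\widetilde\cH_{\mathsf B}\to\mathbb C^2\otimes\cH_{\mathsf B'_2}$ and a state $\ket{\psi'_2}$ with $U_2\otimes V_2\ket\psi=\ket{\Phi^{(\alpha)}}\otimes\ket{\psi'_2}$ and $U_2A_2^{(1)}U_2^\dagger=\ket1\bra1\otimes I_{\mathsf A'_2}$, where $\ket{\Phi^{(\alpha)}}=\sqrt{2/3}\,\ket{00}+\sqrt{1/3}\,\ket{11}$.

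The crux is property (iii). Since $q(a{=}1|s{=}2)=q(b{=}1|t{=}0)=q(1,1|2,0)=1/3$ and the operators $A_2^{(1)}\otimes(I-B_0^{(1)})$ and $(I-A_2^{(1)})\otimes B_0^{(1)}$ are positive (each a product of two commuting projections), both have zero expectation in $\ket\psi$, hence annihilate $\ket\psi$; therefore
\[
\ket{\hat\eta}:=\sqrt3\,\big(A_2^{(1)}\otimes I\big)\ket\psi=\sqrt3\,\big(I\otimes B_0^{(1)}\big)\ket\psi=\sqrt3\,\big(A_2^{(1)}\otimes B_0^{(1)}\big)\ket\psi
\]
is a unit vector. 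Applying $U_2\otimes V_2$ to the first expression and $U_1\otimes V_1$ to the second, and using the above forms of $A_2^{(1)}$, $B_0^{(1)}$, $\ket{\Phi^{(\alpha)}}$ and $\ket{\Phi_3}$, I get
\[
(U_2\otimes V_2)\ket{\hat\eta}=\ket{11}\otimes\ket{\psi'_2},\qquad (U_1\otimes V_1)\ket{\hat\eta}=\ket{11}\otimes\ket{\psi'_1}.
\]
Since local isometries and tensoring with the product vector $\ket{11}$ preserve the multiset of (squared) Schmidt coefficients, $\ket{\psi'_1}$ and $\ket{\psi'_2}$ have the same Schmidt spectrum $\{q_k\}_k$, with $q_k>0$ and $\sum_kq_k=1$.

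Finally, comparing the Schmidt spectrum of $\ket\psi$ computed through the two self-testing identities gives the contradiction: through $U_1\otimes V_1$ it is the spectrum of $\ket{\Phi_3}\otimes\ket{\psi'_1}$, namely each $q_k/3$ with multiplicity tripled; through $U_2\otimes V_2$ it is the spectrum of $\ket{\Phi^{(\alpha)}}\otimes\ket{\psi'_2}$, namely $\{\,2q_k/3,\ q_k/3:k\,\}$. These multisets must coincide, but the largest element of the first is $q_{\max}/3$ while that of the second is $2q_{\max}/3$, where $q_{\max}=\max_kq_k>0$ is attained because $\sum_kq_k=1$; contradiction. Hence no $q\in\cC_{\qs}^{(4,4,3,3)}$ satisfies (i)--(iii), whereas $p_\ast\in\cC_{\qa}^{(4,4,3,3)}$ does, so $\cC_{\qs}^{(4,4,3,3)}\neq\cC_{\qa}^{(4,4,3,3)}$. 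I expect the main care to be needed in the reduction to the supports $\widetilde\cH_{\mathsf A},\widetilde\cH_{\mathsf B}$ -- so that both self-testing theorems genuinely apply to the \emph{same} state and the eigenprojections $A_2^{(1)},B_0^{(1)}$ transform as claimed -- while the conceptual heart, carried by the displayed equations, is the observation that (iii) forces $\ket{\psi'_1}$ and $\ket{\psi'_2}$ to be Schmidt-equivalent, which collides with the fact that embezzlement only makes $\ket{\Phi_3}$ and $\ket{\Phi^{(\alpha)}}$ \emph{approximately}, never exactly, interconvertible.
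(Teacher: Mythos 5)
Your proposal is correct and follows essentially the same route as the paper: invoke the SATWAP and tilted-CHSH self-testing theorems on the same state, use property (iii) to show the two auxiliary states are Schmidt-equivalent (your vector $\ket{\hat\eta}$ is exactly the paper's $\sqrt{3}\,(\widetilde P_2^{(1)}\otimes I)\ket\phi$), and derive the contradiction by comparing the largest Schmidt coefficients of $\ket{\Phi_3}\otimes\ket{\psi'_1}$ and $\ket{\Phi^{(\alpha)}}\otimes\ket{\psi'_2}$. The only differences are cosmetic (positivity of the tensor-product projections in place of the norm-squared computation, and an explicit maximum in place of the supremum).
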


\begin{proof}
Suppose that $q(a, b| s, t) \in \cC_{\qs}^{(4, 4, 3, 3)}$ satisfies the aforementioned properties. Suppose that $q(a, b| s, t)$ is obtained by  the shared state $\ket{\phi}\in \cH_{\sfA}\otimes \cH_{\sfB}$ and $3$-valued observables $P_s, Q_t$, $s, t\in \{0,1,2,3\}$:
$$q(a, b| s, t) = \bra\phi P_s^{(a)}\otimes Q_t^{(b)}\ket\phi.$$
Here, $P_s^{(a)}$ is the projection on the eigenspace of $P_s$ with eigenvalue $\omega^a$, where $\omega= e^{2\pi \mathrm i/3}$, and $Q_t^{(b)}$ is defined similarly. 

Using Theorem~\ref{thm:satwap}, property (i) implies that there are invertible isometries $U_0: \widetilde \cH_{\sfA}\to \mathbb C^3\otimes \cH_{\sfA'_0}$ and $V_0: \widetilde \cH_{\sfB}\to \mathbb C^3\otimes \cH_{\sfB'_0}$ such that 
\begin{align}\label{eq:local-unitary-0}
U_0\otimes V_0 \ket{\phi} = \ket{\Phi_3}\otimes \ket{\phi'_0},
\end{align}
and
$$V_0 \widetilde Q_0 V_0^\dagger = Z\otimes I_{\sfB'_0},$$
where $\widetilde \cH_{\sfA}$ and $\widetilde \cH_{\sfB}$ are supports of $\tr_{\sfB}\ket\phi\bra \phi$ and $\tr_{\sfA}\ket\phi\bra \phi$ respectively, $\widetilde Q_0$ is the restriction of $Q_0$ to $\widetilde \cH_{\sfB}$, and $\ket{\phi'_0}\in \cH_{\sfA'_0}\otimes \cH_{\sfB'_0}$. 

Similarly, using Theorem~\ref{thm:self-t-CHSH}, property (ii) implies that there are  invertible isometries $U_1: \widetilde \cH_{\sfA}\to \mathbb C^2\otimes \cH_{\sfA'_1}$ and $V_1: \widetilde \cH_{\sfB}\to \mathbb C^3\otimes \cH_{\sfB'_1}$ such that 
\begin{align}\label{eq:local-unitary-1}
U_1\otimes V_1 \ket{\phi} = \frac{1}{\sqrt{1+\alpha^2}} (\ket{00}+\alpha\ket{11})\otimes \ket{\phi'_1},
\end{align}
and
$$U_1 \widetilde P_2^{(1)} U_1^\dagger = \ket 1\bra 1\otimes I_{\sfA'_1},$$
where $\ket{\phi'_1}\in \cH_{\sfA'_1}\otimes \cH_{\sfB'_1}$ and $\widetilde P_2$ is the restriction of $P_2$ to $\widetilde \cH_{\sfA}$.

Next, property (iii) implies that 
$$\bra \phi \widetilde P_2^{(1)}\otimes \widetilde Q_0^{(1)}\ket\phi =\bra \phi \widetilde  P_2^{(1)}\otimes I_{\sfB}\ket\phi=\bra \phi I_{\sfA}\otimes \widetilde  Q_0^{(1)}\ket\phi=\frac 13.$$
This, in particular, gives
$$\big\|  \widetilde P_2^{(1)}\otimes \big( I_{\sfB} - \widetilde Q_0^{(1)}   \big) \ket\phi   \big \|^2 = \bra \phi \widetilde P_2^{(1)}\otimes (I_{\sfB}-\widetilde Q_0^{(1)})\ket\phi =0.$$
Thus, $\widetilde P_2^{(1)}\otimes  \widetilde Q_0^{(1)}   \ket\phi =\widetilde P_2^{(1)}\otimes  I_{\sfB}   \ket\phi$. By a similar argument we obtain $\widetilde P_2^{(1)}\otimes  I_{\sfB}   \ket\phi = I_{\sfA}\otimes \widetilde  Q_0^{(1)}   \ket\phi$. Then,
$$\widetilde P_2^{(1)}\otimes  I_{\sfB}   \ket\phi = I_{\sfA}\otimes  \widetilde Q_0^{(1)}   \ket\phi.$$
We compute
\begin{align*}
\frac{1}{\sqrt 3} \ket{11}\otimes \ket{\phi'_0} & =\big(I \otimes \ket 1\bra 1\otimes I_{\sfB'_0}\big) \ket{\Phi_3}\otimes \ket{\phi'_0}\\
& =\big(I \otimes   V_0\widetilde Q_0^{(1)}V_0^\dagger \big) \ket{\Phi_3}\otimes \ket{\phi'_0}\\
& =\big(U_0 \otimes   V_0\big) \big(I\otimes \widetilde Q_0^{(1)} \big) \big(U_0^\dagger \otimes V_0^\dagger\big) \ket{\Phi_3}\otimes \ket{\phi'_0}\\
& =\big(U_0 \otimes   V_0\big) \big(I\otimes \widetilde Q_0^{(1)} \big) \ket{\phi}\\
& =\big(U_0 \otimes   V_0\big) \big(\widetilde P_2^{(1)}\otimes I \big) \ket{\phi}.
\end{align*}
Therefore,
\begin{align*}
\frac{1}{\sqrt 3} \big(U_1 \otimes   V_1\big) \big(U_0^\dagger \otimes   V_0^\dagger\big)\, \ket{11}\otimes \ket{\phi'_0} & = \big(U_1 \otimes   V_1\big) \big(\widetilde P_2^{(1)}\otimes I \big)\ket{\phi}\\
& = \big(U_1\widetilde P_2^{(1)} U_1^\dagger \otimes I \big) \big(U_1 \otimes   V_1\big) \ket{\phi}\\
& = \frac{1}{\sqrt{1+\alpha^2}} \big( \ket 1\bra 1\otimes I_{\sfA'_1} \otimes I  \big) (\ket{00} + \alpha\ket{11})\otimes \ket{\phi'_1}\\
& = \frac{\alpha}{\sqrt{1+\alpha^2}} \ket{11})\otimes \ket{\phi'_1}\\
& = \frac{1}{\sqrt{3}} \ket{11})\otimes \ket{\phi'_1}.
\end{align*}
As a result, $\ket{11}\otimes \ket{\phi'_0}$ and $\ket{11}\otimes \ket{\phi'_1}$ are equivalent up to local isometries. Equivalently, letting $S_j$, $j\in \{0,1\}$, be the \emph{multiset} of the Schmidt coefficients of $\ket{\phi'_j}$ we find that 
\begin{align}\label{eq:S-0-S-1}
S_0=S_1.
\end{align}
On the other hand, comparing~\eqref{eq:local-unitary-0} and~\eqref{eq:local-unitary-1} we find that $\ket{\Phi_3}\otimes \ket{\phi'_0}$ and $\frac{1}{\sqrt{1+\alpha^2}} (\ket{00} + \alpha\ket{11}) \otimes \ket{\phi'_1}$ are equivalent up to local isometries, and have the same multisets of Schmidt coefficients. That is,
$$\frac{1}{\sqrt 3} S_0\cup \frac{1}{\sqrt 3} S_0\cup \frac{1}{\sqrt 3} S_0 = \sqrt{\frac{2}{3}}S_1\cup \frac{1}{\sqrt 3} S_1,$$
where by $xS$ we mean $xS=\{xs:~ s\in S\}$. Then taking the supremum of both sides and using~\eqref{eq:S-0-S-1} we find that
$$\sqrt{\frac 2 3} \,\sup S_1 =\frac {1}{\sqrt 3} \sup S_0 =  \frac{1}{\sqrt 3} S_1,$$
which is a contradiction since $\sup S_1\neq 0$. We are done.

\end{proof}

The main result of~\cite{Col19} is more general than a separation of $\cC_\qs$ and $\cC_{\qa}$. Indeed, in~\cite{Col19} a nonlocal game is introduced that has the following property: in order to win the game with probability $\epsilon$-close to optimal, the Schmidt-rank of the shared entangled state must be $2^{\Omega(\epsilon^{-1/8})}$. The main tool of~\cite{Col19} for proving this result is the so call \emph{stability} of the self-testing protocols under noise. We know that the self-testing property of the tilted CHSH correlation is stable under noise. However, this is not known for the SATWAP correlation. Thus, it is not clear if Theorem~\ref{thm:main-2} can be generalized to a result similar to that of~\cite{Col19}.




\newpage
\appendix

\section{Proof of Theorem~\ref{thm:satwap}}\label{app:satwap}

Let us recall that the SATWAP Bell operator is given by
\begin{align*}
\mathcal{O}_d = \sum_{k=1}^{d-1} \Big(  r_k A_0^k\otimes B_0^{-k} + \bar{r}_k \omega^k A_0^k\otimes B_1^{-k} + \bar{r}_k A_1^k\otimes B_0^{-k}  + r_k A_1^k\otimes B_1^{-k}   \Big),
\end{align*}
where $\omega=2^{2\pi \mathrm i/d}$,
$$r_k = \frac{1}{\sqrt 2} \omega^{\frac{2k-d}{8}} = \frac{1-\mathrm i}{2}\omega^{\frac{k}{4}},$$
and $\bar r_k$ is the complex conjugate of $r_k$.
Similar to~\cite{Sarkar+19} we start with a \emph{sum-of-squares} decomposition of $\mathcal O_d$.
For $1\leq k\leq d-1$ define 
$$C_{0, k} = r_k B_0^{-k} + \bar r_k \omega^k B_1^{-k}, \qquad C_{1,k} = \bar r_k B_0^{-k} + r_k B_1^{-k}.$$
Using $\bar r_k = r_{d-k}$ we find that  $C_{s,k}^\dagger = C_{s,d-k}$, $s\in \{0,1\}$. Then 
we have 
\begin{align*}
\mathcal O_d &= \sum_{k=1}^{d-1} A_0^k\otimes C_{0,k} + A_1^k\otimes C_{1,k}\\
& = \sum_{k=1}^{d-1} A_0^{d-k}\otimes C_{0,d-k} + A_1^{d-k}\otimes C_{1,d-k}\\
& = \sum_{k=1}^{d-1} A_0^{-k}\otimes C_{0,k}^\dagger + A_1^{-k}\otimes C_{1,k}^\dagger\\
& = \frac{1}{2} \sum_{k=1}^{d-1} A_0^k\otimes C_{0,k} +A_0^{-k}\otimes C_{0,k}^\dagger+ A_1^k\otimes C_{1,k}  + A_1^{-k}\otimes C_{1,k}^\dagger\\
& = \frac 1 2\sum_{k=1}^{d-1} \Big(2\, I\otimes I + I\otimes C_{0,k}^{\dagger}C_{0,k} + I\otimes C_{1,k}^{\dagger}C_{1,k}-  M_{0,k}M_{0,k}^\dagger - M_{1,k}M_{1,k}^\dagger \Big)   ,
\end{align*}
where
$$M_{s,k}= A_s^k\otimes I - I\otimes C_{s,k}^\dagger , \qquad \quad s\in \{0,1\}, ~1\leq k\leq d-1.$$
Next we have
\begin{align*}
\sum_{k=1}^{d-1} \Big( C_{0,k}^{\dagger}C_{0,k} + C_{1,k}^{\dagger}C_{1,k} \Big) &=\sum_{k=1}^{d-1} 2(|r_k|^2 + |\bar r_k|^2) I + (\bar r_k^2 \omega^k  + r_k^2)B_0^k B_1^{-k} +(r_k^2\omega^{-k} + \bar r_k^2)B_1^kB_0^{-k}\\
& = 2(d-1) I,
\end{align*}
where we used $\bar r_k^2 \omega^k  + r_k^2=r_k^2\omega^{-k} + \bar r_k^2=0$. Therefore, 
\begin{align*}
\mathcal O_d  & =  2(d-1) -  \frac 1 2\sum_{k=1}^{d-1} \Big(  M_{0,k}M_{0,k}^\dagger + M_{1,k} M_{1,k}^\dagger \Big). 
\end{align*}
This means that for any bipartite state $\ket \psi$ we have
\begin{align*}
\bra\psi\mathcal O_d \ket \psi &=  2(d-1) -  \frac 1 2\sum_{k=1}^{d-1} \Big( \big \|M_{0,k}^\dagger\ket\psi\big\|^2 + \big\| M_{1,k}^\dagger\ket\psi\big\|^2 \Big)\\
&\leq 2(d-1),
\end{align*}
and equality holds if and only if $M_{0,k}^\dagger\ket\psi=M_{1,k}^\dagger\ket\psi=0$. Equivalently, we have $\bra\psi\mathcal O_d \ket \psi =2(d-1)$ if and only if 
\begin{align}\label{eq:equality-case}
A_s^{-k}\otimes I\ket\psi = I \otimes C_{s,k}\ket \psi, \qquad\quad s\in \{0,1\},~ 1\leq k\leq d-1. 
\end{align}

This equation encourages us to prove the following lemma which will be used later. 

\begin{lemma}\label{lem:invariant}
Let $\ket\psi_{\mathsf A\mathsf B}\in\cH_{\mathsf A}\otimes \cH_{\mathsf B}$ be a bipartite state and $M, N$ be two operators such that 
$$M\otimes I \ket \psi = I \otimes N\ket \psi.$$
Then $\supp\big(\tr_{\mathsf B}\, \ket \psi\bra\psi\big)$ is invariant under $M$ and $\supp\big(\tr_{\mathsf A}\,\ket \psi\bra\psi\big)$ is invariant under $N$. 
\end{lemma}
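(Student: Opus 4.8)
The plan is to reduce the claim to an elementary computation via the Schmidt decomposition of $\ket\psi$. Write $\ket\psi = \sum_i \lambda_i \ket{a_i}_{\mathsf A}\ket{b_i}_{\mathsf B}$ for the Schmidt decomposition, where $\lambda_i > 0$ for all $i$ and $\{\ket{a_i}\}$, $\{\ket{b_i}\}$ are orthonormal systems in $\cH_{\mathsf A}$ and $\cH_{\mathsf B}$ respectively (the index set may be countably infinite since the Hilbert spaces are separable). Then $\tr_{\mathsf B}\ket\psi\bra\psi = \sum_i \lambda_i^2 \ket{a_i}\bra{a_i}$, so $\supp(\tr_{\mathsf B}\ket\psi\bra\psi)$ is the closed span of $\{\ket{a_i}\}$, and likewise $\supp(\tr_{\mathsf A}\ket\psi\bra\psi)$ is the closed span of $\{\ket{b_i}\}$. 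Since in the intended applications $M$ and $N$ are bounded (indeed built from unitaries), I would assume $M, N$ bounded, which lets the series below be manipulated termwise and makes the closure argument at the end automatic.

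The first step is to rewrite the hypothesis $M\otimes I\ket\psi = I\otimes N\ket\psi$ as
$$\sum_i \lambda_i (M\ket{a_i})\otimes \ket{b_i} = \sum_i \lambda_i \ket{a_i}\otimes (N\ket{b_i}),$$
and then apply $I\otimes \bra{b_j}$ to both sides. Using $\langle b_j|b_i\rangle = \delta_{ij}$ on the left, this yields, for every $j$,
$$\lambda_j\, M\ket{a_j} = \sum_i \lambda_i \langle b_j | N | b_i\rangle\, \ket{a_i}.$$
As $\lambda_j \neq 0$, the right-hand side shows $M\ket{a_j}\in \overline{\spn}\{\ket{a_i}\} = \supp(\tr_{\mathsf B}\ket\psi\bra\psi)$ for every $j$. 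Since $\{\ket{a_j}\}$ spans a dense subspace of $\supp(\tr_{\mathsf B}\ket\psi\bra\psi)$ and $M$ is bounded, continuity gives $M\,\supp(\tr_{\mathsf B}\ket\psi\bra\psi)\subseteq \supp(\tr_{\mathsf B}\ket\psi\bra\psi)$, which is the first assertion. The invariance of $\supp(\tr_{\mathsf A}\ket\psi\bra\psi)$ under $N$ follows by the mirror-image computation: applying $\bra{a_j}\otimes I$ to the same identity gives $\lambda_j N\ket{b_j} = \sum_i \lambda_i \langle a_j | M | a_i\rangle \ket{b_i}\in \overline{\spn}\{\ket{b_i}\}$.

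I do not expect a genuine obstacle here: the only points needing a little care are justifying the termwise pairing through the (possibly infinite) Schmidt sum — handled by boundedness of $M,N$ and square-summability of $(\lambda_i)$ — and the passage from "$M$ preserves each $\ket{a_j}$" to "$M$ preserves the closed span", which is immediate by continuity. A basis-free alternative, which I would relegate to a remark, is to let $P$ be the projection onto $\supp(\tr_{\mathsf B}\ket\psi\bra\psi)$, use $(P\otimes I)\ket\psi = \ket\psi$ to compute $((I-P)M\otimes I)\ket\psi = ((I-P)\otimes N)\ket\psi = (I\otimes N)((I-P)\otimes I)\ket\psi = 0$, and then take the partial trace over $\mathsf B$ to get $(I-P)M\,\tr_{\mathsf B}(\ket\psi\bra\psi)\,M^\dagger(I-P) = 0$, forcing $(I-P)MP = 0$; but the Schmidt version is more transparent, so I would present that.
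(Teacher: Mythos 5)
Your proof is correct and follows essentially the same route as the paper's: the paper contracts the identity $M\otimes I\ket\psi = I\otimes N\ket\psi$ with an arbitrary vector $\bra w$ on the $\mathsf B$ side, whereas you contract with the Schmidt basis vectors $\bra{b_j}$; both arguments show that $M$ sends a spanning (dense) subset of $\supp(\tr_{\mathsf B}\ket\psi\bra\psi)$ back into that subspace. Your explicit handling of boundedness and the closure step is a slight refinement, but the underlying idea is the same.
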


\begin{proof} 
Observe that a vector $\ket v_{\mathsf A}$ belongs to $\supp\big(\tr_{\mathsf B}\, \ket \psi\bra\psi\big)$ if and only if there exists $\ket w_{\mathsf B}$ such that $\ket v_{\mathsf B}= I_{\mathsf A} \otimes \bra w_{\mathsf B} \cdot \ket \psi_{\mathsf A \mathsf B}$. For such a vector $\ket v$ we have
$$M\ket v = M_{\mathsf A}\otimes \bra w_{\mathsf B} \cdot \ket \psi_{\mathsf A\mathsf B} = I_{\mathsf A}\otimes \bra w_{\mathsf B} N_{\mathsf B}\ket \psi_{ \mathsf A \mathsf B}= I_{\mathsf A}\otimes \bra{w'}_{\mathsf B}\cdot \ket{\psi}_{\mathsf A \mathsf B},$$
where $\ket{w'} = N\ket w$. Thus $M\ket v$ itself belongs to $\supp\big(\tr_{\mathsf B}\, \ket \psi\bra\psi\big)$ and this subspace is invariant under $M$. By a similar argument $\supp\big(\tr_{\mathsf A}\, \ket \psi\bra\psi\big)$ is invariant under $N$. 

\end{proof}

We now give the proofs of parts (i) and (ii) of the theorem separately.

\paragraph{Proof of (i).} 
We first need to verify that $A_s, B_t$, $s, t\in \{0,1\}$ are valid $d$-valued observables.  The fact that they are unitary can easily be verified since they are multiplications of unitary matrices. Then we need to show that $A_s^d=B_t^d=I$. We obviously have $B_0^d=I$. To show this for the other operators we need to calculate $A_s^k$ and $B_1^k$, $s\in \{0, 1\}$.  We notice that $\bra J Z^k\ket J =0$ for any $1\leq k\leq d-1$. Therefore, for $1\leq k\leq d$ we have
\begin{align*}
B_1^k & = \omega^{k/2} \Big(\big(I-2\ket J\bra J\big)Z\Big)^k\\
& = \omega^{k/2}\Big( Z^k - 2\sum_{\ell=0}^{k-1} Z^{\ell} \ket J\bra J Z^{k-\ell}\Big) \\
& =  \omega^{k/2}\Big( I - 2\sum_{\ell=0}^{k-1} Z^{\ell} \ket J\bra J Z^{-\ell}\Big)Z^k.
\end{align*}
Indeed, if we let $\ket{J_\ell} = Z^{\ell}\ket J$, then $\{\ket{J_0}, \dots, \ket{J_{d-1}}\}$ forms an orthonormal basis and 
$$B_1^k = \omega^{k/2}\Big( I - 2\sum_{i=0}^{k-1}  \ket {J_\ell}\bra {J_\ell}\Big)Z^k.$$
Using this we find that $B_1^d=I$.
By similar calculations we obtain 
$$A_0^k=\omega^{-k/4} Z^k\Big(I-(1-\mathrm i) \sum_{\ell=0}^{k-1}\ket {J_{-\ell}}\bra {J_{-\ell}} \Big),\qquad  A_1^k =\omega^{k/4}Z^k\Big(I-(1+\mathrm i)\sum_{\ell=0}^{k-1}\ket {J_{-\ell}}\bra {J_{-\ell}} \Big), $$
from which $A_0^d=A_1^d=I$ follows.

Next, we need to show that~\eqref{eq:equality-case} holds. In this case, since the shared state is the maximally entangled state $\ket{\Phi_d}$, equation~\eqref{eq:equality-case} is equivalent to 
\begin{align*}
A_s^{-k} = C_{s,k}^T,
\end{align*}
where $T$ denotes transposition with respect to the computational basis. We compute
\begin{align*}
C_{0, k} &= r_k B_0^{-k} + \bar r_k \omega^k B_1^{-k}   \\
& =r_k B_0^{-1} \big(  I + r_k^{-1}\bar r_k \omega^k B_0^k B_1^{-k}    \big)\\
& = r_k B_0^{-k}\big( I+\mathrm i\omega^{k/2} B_0^{k}B_1^{-k}\big)\\
& = r_k Z^{-k}\Big( I+\mathrm i\Big( I - 2\sum_{i=0}^{k-1}  \ket {J_\ell}\bra {J_\ell}\Big) \Big)\\
& = \omega^{k/4} Z^{-k}\Big( I- (1+\mathrm i) \sum_{i=0}^{k-1}  \ket {J_\ell}\bra {J_\ell} \Big).
\end{align*}
Therefore,
\begin{align*}
C_{0, k}^T& = \omega^{k/4} \Big( I- (1+\mathrm i) \sum_{i=0}^{k-1}  \ket {J_{-\ell}}\bra {J_{-\ell}} \Big)Z^{-k} = A_0^{-k},
\end{align*}
where we used $\ket {J_\ell} \bra{J_\ell}^T = \ket{J_{-\ell}}\bra{J_{-\ell}}$. By a similar calculation $C_{1, k}^T = A_1^{-k}$ is verified.

\medskip
\paragraph{Proof of (ii).}
By assumptions we know that~\eqref{eq:equality-case}
holds. Then by Lemma~\ref{lem:invariant} the subspaces $\widetilde \cH_{\mathsf A}$ and $\widetilde \cH_{\mathsf B}$ are invariant under $A_s^{-k}$ and $C_{s,k}$ respectively. Therefore, $\widetilde \cH_{\mathsf A}$ is invariant under $A_s$,  and since $B_t$ can be written as a liner combination of $C_{0,d-1}, C_{1,d-1}$, we find that $\widetilde \cH_{\mathsf B}$ is invariant under $B_t$. Using these, by restricting everything to $\widetilde \cH_{\mathsf A}$ and $\widetilde \cH_{\mathsf B}$, we may assume with no loss of generality that the partial traces of $\ket\psi$ are invertible  and that $\widetilde \cH_{\mathsf A}=\cH_{\mathsf A}$ and $\widetilde \cH_{\mathsf B}=\cH_{\mathsf B}$.

Let $1\leq k, \ell\leq d-1$ be such that $k+\ell\leq d-1$. Using~\eqref{eq:equality-case} we obtain
\begin{align*}
I\otimes C_{s,k}C_{s,\ell}\ket \psi = A_s^{-\ell}\otimes C_{s,k}\ket \psi = A_s^{-(k+\ell)}\otimes I\ket\psi = I\otimes C_{s, k+\ell}\ket \psi. 
\end{align*}
Thus since partial traces of $\ket \psi$ are invertible, we have $C_{s,k}C_{s,\ell} = C_{s,k+\ell}.$
This means that
$$C_{s,k}= C_{s,1}^k.$$
Next by a similar argument (using $A_s^d=I$), we find that 
$$C_{s,k}^d=I.$$
Observe that $C_{s,k}^{\dagger} = C_{s,d-k}= C_{s,1}^{d-k}= C_{s,1}^{-k} = C_{s, k}^{-1}$. Thus $C_{s,k}$ is unitary. 

We compute 
$$C_{0, k} = r_k B_0^{-k} + \bar r_k \omega^k B_1^{-k} =  r_k B_0^{-k}\big( I+r_k^{-1}\bar r_k\omega^k B_0^{k}B_1^{-k}\big)=r_k B_0^{-k}\big( I+\mathrm i\omega^{k/2} B_0^{k}B_1^{-k}\big).$$
Therefore, since both $C_{0, k}$ and $\sqrt 2r_k B_0^{-k}=\omega^{\frac{2k-d}{8}}‌B_0^{-k}$ are unitary, 
$$L_k=(\sqrt 2r_k B_0^{-k})^{-1}C_{0,k}=\frac{1}{\sqrt 2} \big( I+\mathrm  i \omega^{k/2}B_0^{k}B_1^{-k}\big),$$
is unitary as well. Expanding $L_kL_k^\dagger=I$, we find that that $\omega^{k/2}B_0^{k}B_1^{-k} - \omega^{-k/2}B_1^{k}B_0^{-k}=0$. Equivalently, 
\begin{align}\label{eq:D-k-B-0-B-1}
D_k = \omega^{k/2}B_0^{k}B_1^{-k} = \omega^{-k/2}B_1^{k}B_0^{-k},
\end{align}
is self-adjoint. $D_k$ is also unitary since it is a multiplication of unitary operators. Thus, $D_k$ is a unitary with eigenvalues $\pm 1$. Therefore, 
\begin{align}\label{eq:D-k-def-P}
D_k = I-2P_k,
\end{align}
for some \emph{orthogonal projection} $P_k$, and we have
\begin{align}\label{eq:C-0-Pk}
C_{0,k} = r_k B_0^{-k}\big( I+\mathrm  i (I-2P_k)\big)= \omega^{k/4} B_0^{-k}\big(I-(1+\mathrm i)P_k\big).
\end{align}
By a similar calculation we have
\begin{align}\label{eq:C-1-Pk}
C_{1,k} = \bar r_k B_0^{-k}\big( I-\mathrm  i D_k\big)= \omega^{-k/4} B_0^{-k}\big(I-(1-\mathrm i)P_k\big).
\end{align}

We now use the fact that $C_{s, k+\ell}= C_{s,k}C_{s, \ell}$. By~\eqref{eq:C-0-Pk} this gives
$$B_0^{\ell}P_k B_0^{-\ell} +P_\ell -(1+\mathrm i) B_0^{\ell}P_kB_0^{-\ell}P_\ell = P_{k+\ell}.$$ 
Subtracting this equation from its adjoint, we find that $B_0^{\ell}P_kB_0^{-\ell}P_\ell+P_\ell B_0^{\ell}P_kB_0^{-\ell}=0$. This means that for any $\ket v\in \supp(P_\ell)$ we have  $2\bra v B_0^\ell P_k B_0^{-k}\ket v=0$. Then, using the fact that $B_0^\ell P_k B_0^{-k}$ is a projection, we find that $B_0^\ell P_k B_0^{-k}\ket v=0$ for all $\ket v\in \supp(P_\ell)$. As a result,
$$P_\ell B_0^{\ell}P_kB_0^{-\ell}=B_0^{\ell}P_kB_0^{-\ell}P_\ell=0,$$
and 
$$B_0^{\ell}P_k B_0^{-\ell} +P_\ell = P_{k+\ell}.$$
Then by a simple induction we arrive at 
\begin{align*}
P_k = \sum_{\ell=0}^{k-1} B_0^\ell P_1 B_0^{-\ell}.
\end{align*}
Indeed, $P_k$ is the summation of $k$ projections $B_0^\ell P_1 B_0^{-\ell}$, $0\leq \ell\leq k-1$ that are mutually orthogonal. 
Moreover, since $C_{0,1}^d=C_{0,1}C_{0,d-1}=I$ we have
$$I = \sum_{\ell=0}^{d-1} B_0^\ell P_1 B_0^{-\ell}.$$

Let us define
\begin{align}\label{eq:F-def-0}
F= \sum_{\ell=0}^{d-1} \omega^{-\ell} B_0^{\ell} P_1B_0^{-\ell}.
\end{align}
By the above equations, the orthogonal projections $B_0^{\ell} P_1B_0^{-\ell}$, $0\leq \ell\leq d-1$, form an eigen-decomposition of $F$. Thus, $F$ is a unitary operator and $F^d=I$.
Next, by a simple calculation we find that $B_0 F^k B_0^{-1} = \omega^{k} F^k$ and 
$$F^{-k}B_0 F^k = \omega^{k} B_0.$$
As a result, eigenspaces of $B_0$ are isomorphic. In fact, letting $\cH^{(k)}\subseteq \cH_{\mathsf B}$, $0\leq k\leq d-1$, be the eigenspace of $B_0$ with eigenvalue $\omega^k$, we have $\cH_{\mathsf B}=\bigoplus_k \cH^{(k)}$ and that 
\begin{align}\label{eq:cH-k-ell-F}
\cH^{(k+\ell)} = F^k \cH^{(\ell)}.
\end{align}
Let $\cH_{\mathsf B'} = \cH^{(0)}$ and define $V:\cH_{\mathsf B}\to \mathbb C^d\otimes \cH_{\mathsf B'}$ by
$$V\cH^{(k)} = \ket{k}\otimes F^{-k}\cH^{(k)}.$$
$V$ is well-defined since $\cH_{\mathsf B}=\bigoplus_k \cH^{(k)}$ and by~\eqref{eq:cH-k-ell-F} we have $F^{-k}\cH^{(k)}=\cH^{(0)} = \cH_{\mathsf B'}$. Moreover, since $F$ is unitary, $V$ is an invertible isometry. Now, for any $\ket{w_0}\in \cH_{\mathsf B'} = \cH^{(0)}$ by~\eqref{eq:cH-k-ell-F} there exists $\ket{w_k}\in \cH^{(k)}$ such that $\ket{w_0} = F^{-k} \ket{w_k}$. Then by the definition of $V$ we have
\begin{align*}
VB_0V^\dagger \ket k\otimes \ket {w_0} &= VB_0V^\dagger \ket k\otimes F^{-k} \ket {w_k}\\
& = VB_0\ket{w_k}\\
&= \omega^k V\ket{w_k}\\
& = \omega^k \ket k\otimes F^{-k}\ket{w_k}\\
& = \omega^k \ket k\otimes \ket{w_0}.
\end{align*}
This means that 
$$VB_0 V^{\dagger} = Z\otimes I_{\mathsf B'},$$
as desired. Next, taking $\ket{w_0}, \ket{w_k}$ as before, we have
\begin{align*}
VFV^\dagger \ket k\otimes \ket {w_0}&  = VFV^\dagger \ket k\otimes F^{-k}\ket {w_k} \\
& = VF \ket{w_k} \\
& = \ket{k+1} \otimes F^{-(k+1)}F\ket{w_k}\\
& = \ket{k+1} \otimes \ket{w_0},
\end{align*}
where in the third line we use $F\ket{w_k}\in \cH^{(k+1)}$. Therefore,
$$VFV^{\dagger} = X\otimes I_{\mathsf B'},$$
where $X:\mathbb C^d\to \mathbb C^d$ is defined by 
$$X\ket i=\ket{i+1 ~(\text{mod } d)}.$$

Now recall that by~\eqref{eq:F-def-0}, $P_1$ is the projection on the eigenspace of $F$ with eigenvalue $1$. Then using $VFV^{\dagger} = X\otimes I_{\mathsf B'}$ we find that
$$VP_1V^\dagger = \ket J\bra J\otimes I_{\mathsf B'},$$
where $\ket J = \frac 1{\sqrt d} \sum_{i=0}^{d-1} \ket i$.
Next, using this in~\eqref{eq:D-k-B-0-B-1} and~\eqref{eq:D-k-def-P} we obtain
\begin{align*}
VB_1V^{\dagger} = \omega^{1/2} V(D_1B_0)V^\dagger= \omega^{1/2} V\big(I-2P_1\big)B_0V^\dagger = \omega^{1/2}\big (I - 2\ket J\bra J\big) Z\otimes I_{\mathsf B'},
\end{align*}
as desired. Also using~\eqref{eq:C-0-Pk} and~\eqref{eq:C-1-Pk} we have
\begin{align}\label{eq:VCV-09}
VC_{0,1} V^\dagger = \omega^{1/4} Z^{-1}\big(I-(1+\mathrm i) \ket J\bra J\big) \otimes I_{\mathsf B'},
\end{align}
and 
\begin{align}\label{eq:VCV-01}
VC_{1,1} V^\dagger = \omega^{-1/4} Z^{-1}\big(I-(1-\mathrm i) \ket J\bra J\big) \otimes I_{\mathsf B'},
\end{align}

To characterize the state $\ket \psi$ and operators $A_0, A_1$, we once again use~\eqref{eq:equality-case}. 
We compute
\begin{align*} 
C_{s,k}\,\tr_{\mathsf A}\big(\ket \psi\bra \psi\big) \,  C_{s, k}^\dagger &= \tr_{\mathsf A}\big(   I_{\mathsf A}\otimes C_{s,k}\,\ket \psi\bra \psi \, I_{\mathsf A}\otimes C_{s, k}^\dagger    \big) \\
& = \tr_{\mathsf A}\big(   A_{s}^{-k}\otimes I_{\mathsf B}\,\ket \psi\bra \psi \, A_{s}^{k}\otimes I_{\mathsf B}    \big) \\
& = \tr_{\mathsf A}\big(   \ket \psi\bra \psi     \big).
\end{align*}
Thus $\tr_{\mathsf A}\big(\ket \psi\bra \psi\big)$ commutes with $C_{s,k}$, $s\in \{0,1\}$, $1\leq k\leq d-1$, and with any operator in the algebra generated by them. On the other hand, it is not hard to verify that the algebra generated by the operators in~\eqref{eq:VCV-09} and~\eqref{eq:VCV-01} equals the space of all operators of the form $Q\otimes I_{\mathsf B'}$.\footnote{Taking appropriate linear combinations, we find that both $Z^{-1}\otimes I_{\mathsf B'}$ and $\ket J\bra J\otimes I_{\mathsf B'}$ belong to this algebra. Then $Z^{k}\ket J\bra JZ^\ell\otimes I_{\mathsf B'}$ for all $0\leq k, \ell\leq d-1$ belong to this algebra, which span the space of operators of the form $Q\otimes I_{\mathsf B'}$.} 
As a result, $V\tr_{\mathsf A}\big(\ket \psi\bra \psi\big) V^\dagger$ commutes with any operator of the form $Q\otimes I_{\mathsf B'}$. This means that 
$$\tr_{\mathsf A}\big( (I_{\mathsf A}\otimes V) \ket \psi\bra\psi (I_{\mathsf A}\otimes V^\dagger )\big)=V\tr_{\mathsf A}\big(\ket \psi\bra \psi\big)V^\dagger = \frac 1 d I\otimes \rho'_{\mathsf B'},$$ 
for some density operator $\rho'$ acting on $\cH_{\mathsf B'}$. Therefore, there exists an invertible isometry $U: \cH_{\mathsf A}\to \mathbb C^d\otimes \cH_{\mathsf A'}$ such that\footnote{Here we use the fact that $\tr_{\mathsf B}\big(\ket \psi\bra\psi\big)$ is invertible.} 
$$U\otimes V\ket \psi=\ket{\Phi_d} \otimes \ket{\psi'}_{\mathsf A'\mathsf B'},$$
where $\ket{\psi'}_{\mathsf A'\mathsf B'}$ is a purification of $\rho'$.
Using this in~\eqref{eq:equality-case} we find that 
$$UA_s^{-1}U^{\dagger} = \big(V C_{s, 1}V^{\dagger}\big)^{T},$$
where $T$ is the transpose with respect to the computational basis of $\mathbb C^d$. Equivalently, we have
$$U A_0 U^{\dagger}= \big(V C_{0, 1}^{\dagger}V^{\dagger}\big)^{T}= \omega^{-1/4}  Z\big(I-(1-\mathrm i) \ket J\bra J\big) \otimes I_{\mathsf B'},$$
and
$$U A_1 U^{\dagger}= \big(V C_{1, 1}^{\dagger}V^{\dagger}\big)^{T}= \omega^{1/4}  Z\big(I-(1+\mathrm i) \ket J\bra J\big) \otimes I_{\mathsf B'}.$$
We are done.

\end{document}